\newenvironment{proof}{\noindent\textbf{Proof\ }}{\hfill\rule{3mm}{3mm}}
\newtheorem{lemma}{Lemma}
\newtheorem{theorem}{Theorem}
\newtheorem{corollary}{Corollary}
 \definecolor{BLACK}{gray}{0}
 \definecolor{WHITE}{gray}{1}
 \definecolor{RED}{rgb}{1,0,0}
 \definecolor{GREEN}{rgb}{0,1,0}
 \definecolor{BLUE}{rgb}{0,0,1}
 \definecolor{CYAN}{cmyk}{1,0,0,0}
 \definecolor{MAGENTA}{cmyk}{0,1,0,0}
 \definecolor{YELLOW}{cmyk}{0,0,1,0}
\begin{document}

\widetext

\title{Minimal nonorthogonal gate decomposition for qubits with limited control}

\author{Xiao-Ming Zhang}
\affiliation{Shenzhen Institute for Quantum Science and Engineering and Department of Physics,
Southern University of Science and Technology, Shenzhen 518055, China}
\affiliation{Department of Physics, City University of Hong Kong, Tat Chee Avenue, Kowloon, Hong Kong SAR, China, and City University of Hong Kong Shenzhen Research Institute, Shenzhen, Guangdong 518057, China}

\author{Jianan Li}
\affiliation{Shenzhen Institute for Quantum Science and Engineering and Department of Physics,
Southern University of Science and Technology, Shenzhen 518055, China}

\author{Xin Wang}
\email{x.wang@cityu.edu.hk}
\affiliation{Department of Physics, City University of Hong Kong, Tat Chee Avenue, Kowloon, Hong Kong SAR, China, and City University of Hong Kong Shenzhen Research Institute, Shenzhen, Guangdong 518057, China}
\author{Man-Hong Yung}
\email{yung@sustc.edu.cn}
\affiliation{Shenzhen Institute for Quantum Science and Engineering and Department of Physics,
Southern University of Science and Technology, Shenzhen 518055, China}
\affiliation{Shenzhen Key Laboratory of Quantum Science and Engineering, Southern University of Science and Technology, Shenzhen, 518055, China}
\affiliation{Central Research Institute, Huawei Technologies, Shenzhen, 518129, China}

\date{\today}
\begin{abstract}
In quantum control theory, a question of fundamental and practical interest is how an arbitrary unitary transformation can be decomposed into minimum number of elementary rotations for implementation, subject to various physical constraints. Examples include the singlet-triplet (ST) and exchange-only (EO) qubits in quantum-dot systems, and gate construction in the Solovay-Kitaev algorithm. For two important scenarios, we present complete solutions to the problems of optimal decomposition of single-qubit unitary gates with non-orthogonal rotations. For each unitary gate, the criteria for determining the minimal number of pieces is given,  the explicit gate construction procedure, as well as a computer code for practical uses. Our results include an analytic explanation to the four-gate decomposition of EO qubits, previously determined numerically by Divincenzo et al [Nature, 408, 339 (2000)]. Furthermore, compared with the approaches of Ramon sequence and its variant [Phys. Rev. Lett., 118, 216802 (2017)], our method can reduce about 50\% of gate time for ST qubits. Finally, our approach can be extended to solve the problem of optimal control of topological qubits, where gate construction is achieved through
the braiding operations. 
\end{abstract}
\maketitle

A universal gate set for quantum computation can be constructed by any two-qubit entangling gate, together with arbitrary single-qubit gates~\cite{Nielsen.00}. In the laboratory, elementary single-qubit gates are normally constructed by switching on and off an external field at certain times (i.e., a square pulse), resulting in a rotation of a Bloch vector along certain axis of the Bloch sphere. The question is, {\it how to optimize the use of these elementary rotations to form  arbitrary single-qubit gates?} This question becomes crucial for quantum platforms where controls are {\it limited}~\cite{Petta.05,Maune.12}. Consequently, a general rotation needs to be decomposed into a sequence of elementary rotations around non-parallel axes. In fact, this ``piecewise'' decomposition of general operations has inspired the development of composite pulses, which play an important role in quantum control on various types of qubits~\cite{Wimperis.94,Cummins.03,Wang.12,Bando.13,Kestner.13,Kosut.13,Wang.14}. 

Typically, one would like to reduce the complexity of gates: a long sequence of elementary gates implies the need of frequent switching of the applied field. Therefore, a minimal decomposition of single-qubit gates is of practical and fundamental interest in quantum computing. For the cases where the available elementary controls are rotations around two {\it orthogonal} axes, it is well known that arbitrary rotations can be constructed with three pieces~\cite{Nielsen.00}, for example the $x$-$z$-$x$ sequence~\cite{gatenote}. 

However, in many systems, the available elementary rotations are non-orthogonal. Take singlet-triplet (ST) qubit as an example~\cite{Petta.05}, the $x$-rotation can be achieved via a magnetic field gradient \cite{Foletti.09,Maune.12,Petersen.13, Wu.14}, but a pure $z$-rotation is hardly achievable as the magnetic field gradient has to be completely turned off during execution of a gate, which is impractical unless the micromagnet is applied~\cite{Brunner.11}. As another example, control of an exchange-only (EO) qubit \cite{Divincenzo.00,Laird.10} is only available via two rotation axes 120$^\circ$ apart from each other. 

In the literature of quantum dots, much effort has been made to optimize gate sequences involving non-orthogonal axes \cite{Divincenzo.00,Hanson.07,Ramon.11,Zhang.16,Zhang.17,Shim.13,Throckmorton.17}. If the rotation axes along $\hat{x}$ and $\hat{x}+\hat{z}$ are both available, a Hadamard gate can convert an $x$-rotation to a $z$-rotation, providing an $x$-Hadamard-$x$-Hadamard-$x$ sequence~\cite{Wang.12,Wang.14}. Moreover, if the angle between the two available axes (denoted as $\hat{x}$ and $\hat{m}$) is greater than $45^\circ$, Hadamard gate can be replaced by the rotation around $\hat{m}$ to reduce the gate time~\cite{Zhang.17,Throckmorton.17}. But the number of pieces is still unchanged.
It remains an outstanding problem whether more efficient decompositions with non-orthogonal axes is possible. 

In an early study of the EO qubit, Divincenzo et al. {\it numerically} found that four-piece sequences can be constructed for almost all quantum gates~\cite{Divincenzo.00}, but no analytical explanation was given. Furthermore, in applying the Solovay-Kitaev theorem \cite{Nielsen.00}, it was believed that an arbitrary gate can be decomposed into three pieces \cite{Nielsen.00,Kaye.07}, but the problem turns out to be far more complicated.

Here, for two typical scenarios: the elementary rotation axes are fixed along two directions, or can vary in a range of a plane, we present complete solutions to the problem of  minimal decomposition of single-qubit transformation. We determine the minimum number of pieces for any given unitary transformation, and the explicit procedures in constructing the minimum decomposition are also provided as computer code~\cite{sm} (Appendix~\ref{sec:code}). Furthermore, we obtain the minimum number of pieces for all possible unitary, which turn out to have the same expression for both scenarios. For applications, we demonstrate how minimal decomposition can be implemented in ST qubit systems, which can improve the gate time and robustness under real experimental circumstances. 

\textit{\it Zeeman type qubits}---We consider a Zeeman type Hamiltonian
\begin{equation}
H_{\rm{ZM}}=h \, \sigma_x+J(t) \, \sigma_z,\label{eq:st}
\end{equation}
where $h$ is a constant, $J(t)$ can vary with time, and $\sigma_x$, $\sigma_z$ are Pauli matrices. Eq.~\eqref{eq:st} can represent a two-level system with fixed energy gap $h$ under an external control field $J(t)~$\cite{Nartinis.05,Greilich.09,Poem.11}. Besides, it can also represents the ST qubits of quantum-dot systems~\cite{Petta.05,Hanson.07,Foletti.09,Ramon.11,Maune.12,Barnes.12,Wang.12,Petersen.13,Wu.14,Zhang.17}, where $h$ and $J(t)$ represents the magnetic field gradient and exchange interaction respectively. The value of $J(t)$ should be bounded within a certain range, $0\leqslant J(t)\leqslant J_{\max}$, in order to satisfy the requirement $B\gg J_{\max}$ ($B$ the average magnetic field strength) that ensure other energy levels far away from the two we concerned~\cite{Petta.05, Hanson.07,Wu.14}. If we let $\Theta=\arctan(J_{\max}/h)$, the available rotations is given by $\mathcal{G}_{ZM}=\{R(\hat{n},\phi)|\hat{n}=(\sin\theta,0,\cos\theta),\theta\in[\frac{\pi}{2}-\Theta,\frac{\pi}{2}]\}$.

\textit{\it Exchange-only (EO) qubits.}---The EO qubit is constructed by a coupled triple-quantum-dot system. Assuming a homogeneous magnetic field, the Hamiltonian in this subspace can be written as~\cite{Divincenzo.00,Laird.10,Gaudreau.12,Medford.13,Zhang.16}
\begin{equation}
H_{\rm{EO}}=J_{23} \ \sigma_z-J_{12} \ (\frac{1}{2}\sigma_z-\frac{\sqrt{3}}{2}\sigma_x), \label{eq:eo}
\end{equation}
 where $J_{12}\geqslant 0$ and $J_{23}\geqslant 0$ are coupling constants between the neighboring  dots. However, it remains an experimental challenge to simultaneous apply both coupling, which means that either $J_{23}$ or $J_{12}$ should be non-zero at each moment of time. In other words, only elementary rotations around $\hat{z}$ or another axis $\sqrt{3}\hat{x}/2-\hat{z}/2$ can be applied, i.e., $\mathcal{G}_{EO}=\left\{R(\hat{n},\phi)|\hat{n}=\hat{z} \; \text{or}\; \hat{n}=\sqrt{3}\hat{x}/2-\hat{z}/2 \right\}$. 
 
 Motivated by the systems described above, we will consider two types of models. They can include but are not limited to the systems represented by Eq.~\eqref{eq:st} and Eq.~\eqref{eq:eo}.

{\it Definitions}---A single-qubit rotation, $R(\hat{n},\phi)$, around the axis $\hat{n}=(\sin\theta\cos\psi,\sin\theta\sin\psi,\cos\theta)$ for an angle $\phi\in[0,4\pi)$, can be generically described by,
\begin{equation}
R(\hat{n},\phi) \equiv \exp [-i (\bm{\sigma}\cdot\hat{n}) \phi/2] \ ,
\end{equation}
where $\bm{\sigma}\equiv(\sigma_x,\sigma_y,\sigma_z)$ contains the Pauli matrices. We are interested in how a unitary gate $U(\theta,\psi,\phi)$ (up to an overall phase factor) can be {\it minimally} decomposed into a sequence of elementary rotations , $R_i=R(\hat{n}_i,\phi_i)$ in a given set $\mathcal{G} = \{ R(\hat{n}_i,\phi_i) \}$ limited by physical constraints. 

For convenience, we define the $p$-power of a set $\mathcal{G}$ to contain all combinations of products of $p$ elementary rotations, i.e., ${\mathcal{G}^p} \equiv \{ \prod\nolimits_i^p {{R_i}} |{R_i} \in \mathcal{G}\} $. Our task is to solve the following decomposition:
\begin{equation}
U(\theta,\psi,\phi)=\prod^{p}_{i=1}R_i \in \mathcal{G}^p, \label{general}
\end{equation}
subject to the condition, $\hat{n}_i\neq\hat{n}_{i+1}$. Here $p$ is referred to as ``number of pieces''. Of course, for each $U$ the solution of $p(U)$ satisfying the decomposition is not unique; in fact, there are infinitely many possible solutions.

The goal of this work is to determine the minimum value $p_{\min} (U)$ for any given unitary transformation $U(\theta,\psi,\phi)$.

\begin{figure}[t]
\includegraphics[width=0.9 \columnwidth]{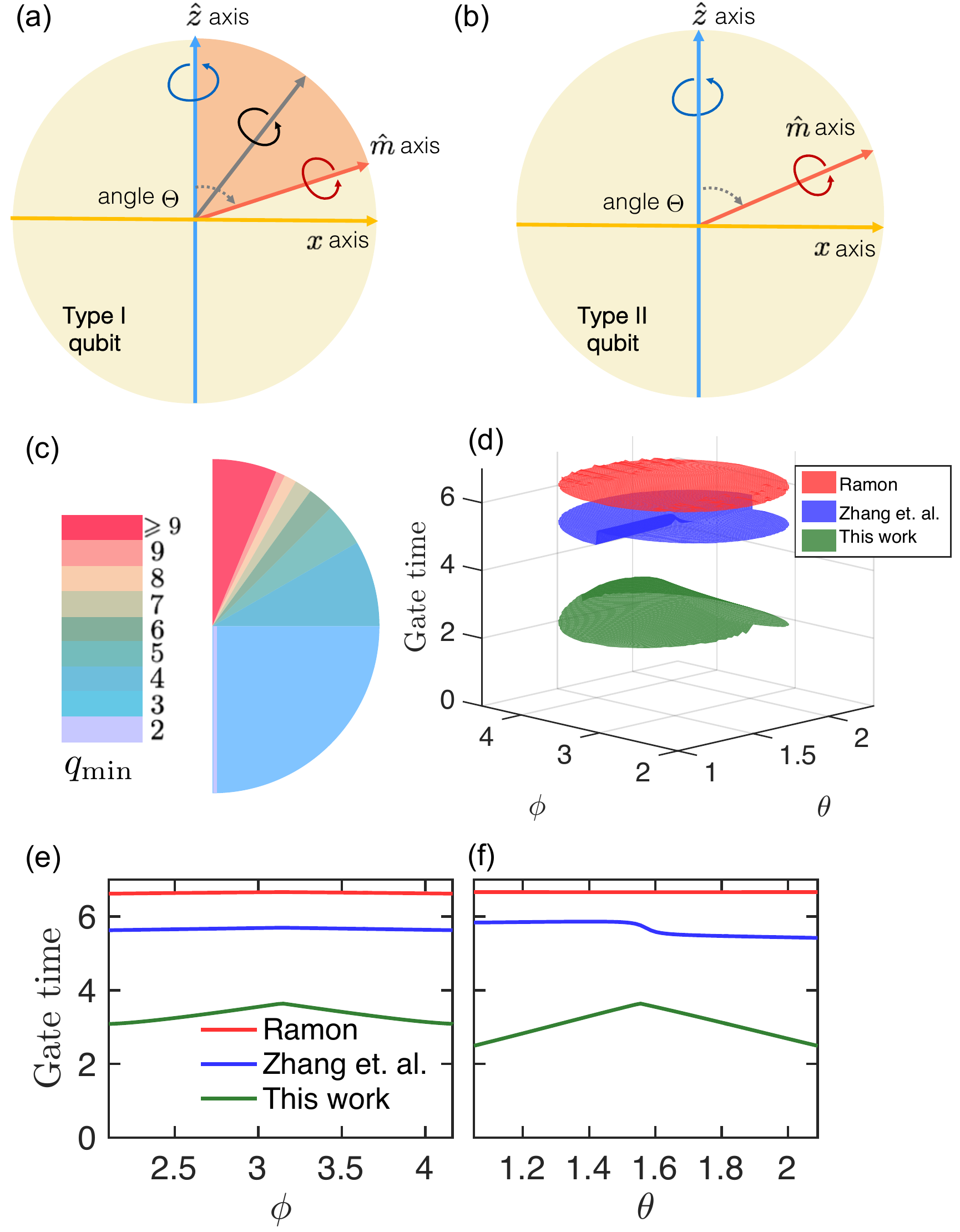}
\caption{ (a) Schematic description of Type-I qubits. Rotation axes are allowed to be chosen freely in a range (orange) bounded by $\hat{z}$ and $\hat{m}$ axes. (b) Schematic description of Type-II qubits. Rotation axes are fixed to be either $\hat{z}$ or $\hat{m}$. (c) The minimum number of pieces for {\it all possible} rotations ($q_{\min}$) for Type-I qubits. The color scale represents different $q_{\min}$ when $\hat{m}$ lies in the corresponding area. (d)-(f) are gate time comparison for ST qubits. Red (upper surface or lines): five-piece ''Ramon" sequence~\cite{Ramon.11}; blue (middle surface or lines): revised schemes proposed by Zhang et al.~\cite{Zhang.17}; green (lower surface or lines): minimal decomposition proposed in this work. Target gates are $U(\theta,\pi/2,\phi)$, and we set $J_{\max} = 30 h$. (e) and (f) are 2D cut for (d) with $\theta=\pi/2$ and $\phi=\pi$ respectively.
\label{fig:rot}}
\end{figure}

\begin{figure}[t]
\includegraphics[width=0.8 \columnwidth]{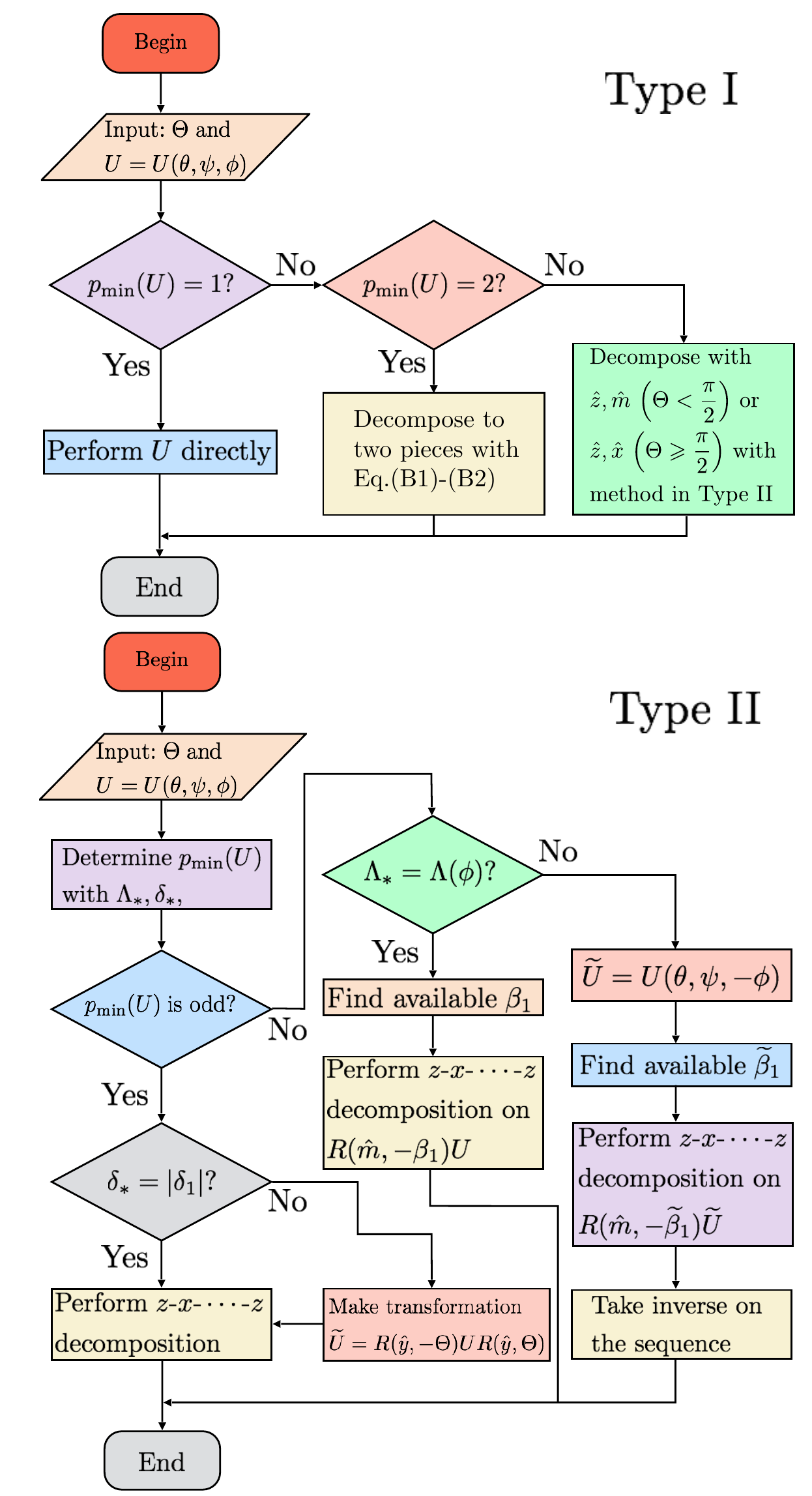}
\caption{ Flow chart for constructing the minimum decomposition sequences for Type-I qubits and Type-II qubits.
\label{fig:flow}}
\end{figure}

{\it Type-I qubits}---Here, we consider a scenario where the rotation axes are allowed to vary in a limited range of a plane. We suppose the range is enclosed by the boundary rotation axes denoted by $\hat{z} $ and $\hat{m}=(\sin\Theta,0,\cos\Theta) $; the angle between the boundary axes are given by $\Theta=\arccos \hat{z}^T\hat{m}\in(0,\pi]$ [see Fig.~\ref{fig:rot} (a)]. We define the set containing all possible elementary rotations by $\mathcal{G}_\xi\equiv\{R(\hat{n},\phi)|\hat{n}=(\sin\theta,0,\cos\theta)\}$. Furthermore, the boundary of $\mathcal{G}_\xi$ is given by the joint set of the rotations: 
\begin{equation}\label{gbugzgm}
\mathcal{G}_b\equiv\mathcal{G}_z\cup\mathcal{G}_m \ ,
\end{equation}
where $\mathcal{G}_z\equiv\{R(\hat{z}, \phi)\}$ and $\mathcal{G}_m\equiv\{R(\hat{m}, \phi)\}$.

 The reason why we use $\mathcal{G}_\xi$ to describe Type-I qubits is that the derivation and the expression of the results can be simpler and more elegant. Elementary rotation set $\mathcal{G}_{ZM}$ for Eq.~\eqref{eq:st} can be mapped to $\mathcal{G}_\xi$ with appropriate coordinate transformation~\cite{rotnote}. 

Below, we will present all the cases where Eq.\eqref{general} can be satisfied for $\mathcal{G}= \mathcal{G}_\xi$ and $\Theta\in(0,\pi]$ with a certain value of $p$ (see proofs in Appendix~\ref{app:B}). For $p=1$, Eq.~\eqref{general} can be satisfied, \textit{if and only if} one of the following conditions are satisfied: (i) $\theta\in [0,\Theta] $ and $\psi=0$, (ii) $\phi\in\{0, 2\pi\}$ (No rotation is applied, or adding a trivial global phase) or (iii) $\theta=0$ (around $\hat{z}$). For $p=2$, Eq.~\eqref{general} can be satisfied, \textit{if and only if} one of the following conditions are satisfied (i) $\phi\in\{0, 2\pi\}$, (ii) $\theta=0$, (iii) $\max\left\{\cot\xi_{\pm}\right\} \geqslant \cot\Theta$, with $\cot\xi_{\pm}= (\pm s_\psi c_{\phi/2} + c_\psi s_{\phi/2} c_\theta)/(s_{\phi/2}s_\theta)$, where we defined ${c_x} \equiv \cos x$ and ${s_x} \equiv \sin x$. (iv) $\Theta=\pi$. In case (iv), the rotation axes can be chosen freely in the entire $x$-$z$ plane; two pieces are sufficient, which is consistent with the result in Ref.~\cite{Shim.13}.

For $p\geqslant 3$, the results are summarized in Theorem.~\ref{th:1}. 
\begin{theorem}[Bulk-to-boundary mapping]\label{th:1}
(i) For $ 0 < \Theta < \pi/2$, if a unitary gate $U$ can be decomposed to $p \ge 3$ pieces, $U\in\mathcal{G}_{\xi}^p$, it can always be decomposed into $p$ pieces with rotation axes at the boundary, i.e.,
 \begin{equation}
\mathcal{G}_{\xi}^p=\mathcal{G}_{b}^p.\label{equ}
\end{equation}
(ii) for $\Theta \ge \pi/2$, one can always apply the orthogonal $z$-$x$-$z$ decomposition for any single-qubit unitary gate with $p=3$ pieces. 
\end{theorem}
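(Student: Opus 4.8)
\textbf{Part (ii)} is essentially free: when $\Theta\ge\pi/2$ the admissible range $\theta\in[0,\Theta]$ contains both $\theta=0$ (the $\hat z$ axis) and $\theta=\pi/2$ (the $\hat x$ axis), so the classic Euler factorization $U=R(\hat z,\alpha)R(\hat x,\beta)R(\hat z,\gamma)$, valid for every single-qubit unitary up to phase with $\alpha,\gamma\in[0,4\pi)$ and $\beta\in[0,\pi]$, already realizes $U$ with $p=3$ pieces whose axes lie in $\mathcal G_\xi$. For \textbf{part (i)} the inclusion $\mathcal G_b^p\subseteq\mathcal G_\xi^p$ is immediate since $\mathcal G_b\subseteq\mathcal G_\xi$, so the content is $\mathcal G_\xi^p\subseteq\mathcal G_b^p$. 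My plan is two-stage: (A) reduce to $p=3$, then (B) settle $p=3$ by an explicit reachable-set computation. For (A): write $U=\prod_{i=1}^p R(\hat n_i,\phi_i)$ with $\theta_i\in[0,\Theta]$, $\hat n_i\ne\hat n_{i+1}$, and suppose some $\theta_j$ is strictly interior. Since $p\ge3$, the index $j$ sits inside one of the consecutive triples $(k,k+1,k+2)$; apply the identity $\mathcal G_\xi^3=\mathcal G_b^3$ to that three-factor sub-product to re-express it with all three axes in $\{\hat z,\hat m\}$. This replaces a length-$3$ block by a length-$3$ block on the boundary and strictly lowers the number of interior axes, so after finitely many moves all axes lie in $\{\hat z,\hat m\}$; collapsing adjacent equal axes gives $U\in\mathcal G_b^q$ with $q\le p$, and $\mathcal G_b^q\subseteq\mathcal G_b^p$ (a terminal $R(\hat z,\phi)$ lies in $\mathcal G_z\mathcal G_m\mathcal G_z$ by the characterization below, so extra pieces can be padded in). This last point is routine bookkeeping.

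For (B) one first characterizes $\mathcal G_b^3$. Because axes must alternate and $\Theta<\pi/2$ (so $2\Theta<\pi$), $\mathcal G_b^3=\mathcal G_z\mathcal G_m\mathcal G_z\cup\mathcal G_m\mathcal G_z\mathcal G_m$, and the standard Euler ``cone'' argument applies: in $R(\hat z,a)R(\hat m,b)R(\hat z,c)$ the factor $R(\hat z,c)$ fixes $\hat z$, $R(\hat m,b)$ sweeps $\hat z$ along the circle of angular radius $\Theta$ about $\hat m$ (this circle passes through $\hat z$, and its far point is at angular distance $2\Theta$ from $\hat z$), $R(\hat z,a)$ adds the missing longitude, and $c$ supplies the residual rotation about the image axis; likewise for the $m$-$z$-$m$ pattern about $\hat m$. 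Hence
\[
\mathcal G_z\mathcal G_m\mathcal G_z=\{U:\angle(U\hat z,\hat z)\le 2\Theta\},\qquad \mathcal G_m\mathcal G_z\mathcal G_m=\{U:\angle(U\hat m,\hat m)\le 2\Theta\},
\]
so $\mathcal G_b^3$ is the union of these two ``displacement cones.'' Combined with $\mathcal G_b\subseteq\mathcal G_\xi$ (giving $\supseteq$), what remains is to prove $\mathcal G_\xi^3\subseteq\mathcal G_b^3$: every product of three rotations about axes in the wedge $\theta\in[0,\Theta]$ moves $\hat z$ by at most $2\Theta$, or moves $\hat m$ by at most $2\Theta$.

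This containment is the heart of the theorem and the step I expect to be the main obstacle. A naive triangle-inequality bound on $\angle(U\hat z,\hat z)$ alone is far too weak — three wedge rotations can in fact displace $\hat z$ by nearly $4\Theta$ (e.g. $R(\hat m,\pi)R(\hat z,\pi)R(\hat m,\pi)=R(\hat z',\pi)$ with $\hat z'$ at $2\Theta$ from $\hat z$), which is precisely why the \emph{disjunction} with the $\hat m$-cone is essential. The concrete attack I would take: parametrize $U\in\mathcal G_\xi^3$ by the six numbers $(\theta_i,\phi_i)$, write $\cos\angle(U\hat z,\hat z)$ and $\cos\angle(U\hat m,\hat m)$ as explicit trigonometric polynomials in them (using that all $\hat n_i$ and $\hat z,\hat m$ share the common perpendicular $\hat y$, which is the rigidity one must exploit), and show $\min\{\angle(U\hat z,\hat z),\angle(U\hat m,\hat m)\}\le 2\Theta$ by maximizing this minimum over the box $\theta_i\in[0,\Theta]$, $\phi_i\in[0,4\pi)$; the maximizer can be pushed onto a low-dimensional face (each $\phi_i\in\{0,2\pi\}$, reducing the rotations to reflections, or each $\theta_i\in\{0,\Theta\}$), on which the inequality collapses to an elementary spherical-trigonometry identity. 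An alternative, reduction-free route to part (i) is variational: fix $U$, minimize $\sum_i\sin\theta_i\sin(\Theta-\theta_i)$ over the compact set of all $p$-piece decompositions of $U$, and show that at a minimizer no $\theta_j$ can be strictly interior because a local variation of $(\theta_j,\phi_{j-1},\phi_j,\phi_{j+1})$ along the constraint $\prod_iR(\hat n_i,\phi_i)=U$ would strictly decrease the objective — there the delicate points are the non-degeneracy of the relevant Jacobian and keeping $\theta_{j\pm1}$ inside $[0,\Theta]$.
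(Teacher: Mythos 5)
Your part (ii) and your reduction scaffolding for part (i) are fine: the Euler $z$-$x$-$z$ argument is exactly what the paper uses for $\Theta\ge\pi/2$; the characterization $\mathcal{G}_b^3=\mathcal{G}_z\mathcal{G}_m\mathcal{G}_z\cup\mathcal{G}_m\mathcal{G}_z\mathcal{G}_m$ as the union of the two displacement cones $\angle(U\hat z,\hat z)\le 2\Theta$ and $\angle(U\hat m,\hat m)\le 2\Theta$ is correct (it is equivalent to the paper's Lemma on $z$-$m$-$z$ decomposition, since $\angle(U\hat z,\hat z)=2|\delta_1|$); and the window-replacement induction from $p=3$ to general $p$, with merging and identity-padding, is sound bookkeeping. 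But the proof has a genuine gap exactly where you say you expect one: the containment $\mathcal{G}_\xi^3\subseteq\mathcal{G}_b^3$, i.e.\ the claim that every product of three wedge rotations displaces $\hat z$ or $\hat m$ by at most $2\Theta$, is the entire analytic content of the theorem, and you only offer a plan for it (maximize the min of two trigonometric functions of six variables over a box and hope the maximizer lands on a tractable face, or a variational argument whose Jacobian non-degeneracy and constraint feasibility you explicitly flag as unresolved). Neither route is carried out, and neither is obviously going to close: the objective $\min\{\angle(U\hat z,\hat z),\angle(U\hat m,\hat m)\}$ is non-smooth, the constrained maximizer need not sit at a vertex of the box, and the "collapse to spherical trigonometry" step is precisely the hard inequality restated. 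As written, the theorem is not proved.

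For comparison, the paper avoids the three-piece cone inequality entirely by working at the level of \emph{two} pieces: Lemmas 2--4 and Corollary 1 establish $\mathcal{G}_\xi\mathcal{G}_\xi=\mathcal{G}_b\mathcal{G}_\xi=\mathcal{G}_\xi\mathcal{G}_b$, i.e.\ in any product of two wedge rotations one axis can be slid all the way to a boundary axis while the other provably stays inside the wedge. The proof of that statement is itself delicate — it fixes the product, defines the second axis $\theta_4=y(\theta_3)$ implicitly as a function of the first, and uses a continuity and monotonicity argument (via the invariant $\sin\frac{\phi_3}{2}\sin\frac{\phi_3'}{2}\sin(x_0-x)=\sin\frac{\phi_4}{2}\sin\frac{\phi_4'}{2}\sin(y_0-y)$) together with the intermediate value theorem to show $y$ cannot exit $[0,\Theta]$ as $x$ is pushed to $0$ or $\Theta$. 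Once this two-piece bulk-to-boundary step is in hand, the induction $\mathcal{G}_\xi^p\subset\mathcal{G}_b^{p-2}\mathcal{G}_\xi\mathcal{G}_\xi$ and the three-piece boundary decomposition of a single wedge rotation finish the proof. If you want to salvage your route, the most direct fix is to prove this two-piece sliding lemma (which then also delivers your $p=3$ case); proving the three-piece cone inequality head-on is strictly harder than what the paper actually does.
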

When (i) $\Theta\in(0,\pi/2)$, it can be reduced to the Type-II with same $\Theta$ apart, so the existence of $p$-piece decomposition is determined by Eq.~\eqref{eq:odd} and \eqref{eq:even}, which will be described below; when (ii) $\Theta\in[\pi/2,\pi)$, decomposition with $p\geqslant 3$ pieces always exist, which  is obvious.

 \begin{figure}[t]
\includegraphics[width=1\columnwidth]{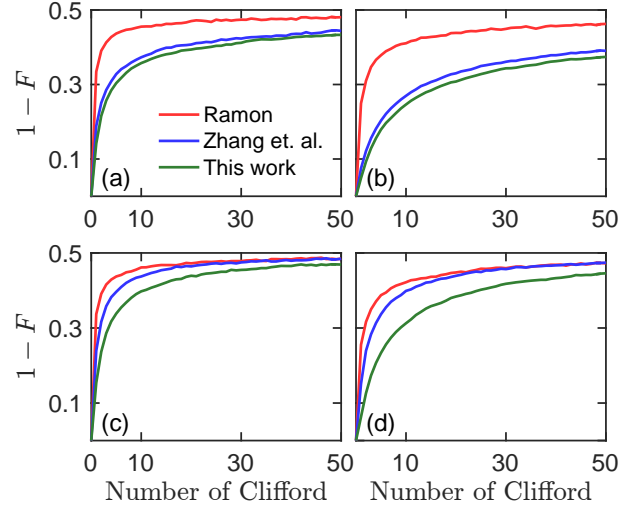}
\caption{ Randomized Benchmarking for different decomposition schemes for ST qubits. Red (upper) lines: ``Ramon" sequence \cite{Ramon.11}; blue (middle) lines: revised scheme in \cite{Zhang.17}; green (lower) lines: minimal decomposition proposed in this work. We set $J_{\max}=30$ and $h=1$; $\sigma_J/J=0.00426$ (barrier control) for (a), (b) and $\sigma_J/J=0.0563$ (tilt control) for (c), (d); $\sigma_h/h=0.575$ for (a), (c) and  $\sigma_h/h=0.288$ for (b), (d).}
\label{fig:rb}
\end{figure}

{\it Type-II qubits}---Then, we consider when only elementary rotations with two fixed axes are allowed, for example, $\hat{z}$ and $\hat{m}$, where the angle between them is given by $\Theta=\arccos \hat{z}^T\hat{m}\in(0,\pi/2]$\cite{rotnote}. The set containing all elementary rotations is given by $\mathcal{G}_b$ [see Eq.~(\ref{gbugzgm}) and Fig.~\ref{fig:rot} (b)].  For  any target $U$ and angle $\Theta$, we have solved the problem of minimal gate decomposition, in terms of a pair of inequalities:

(i) For the odd-piece decomposition, i.e.,~$p= 2l-1$, for some $l\in \mathbb{Z}^{+}$, the decomposition in Eq.~\eqref{general} can be satisfied for a given rotation {\it if and only if}
\begin{equation}
\delta _* \leqslant \Theta(l - 1) \label{eq:odd} ,
\end{equation}
where the value of ${\delta _{*}} \equiv \min \left\{ {|\delta_1 (\theta ,\phi )|,|\delta_2(\theta ,\psi ,\phi ,\Theta )|} \right\}$ is taken to be the minimum value between $\delta_1 (\theta ,\phi ) \equiv \sin^{ - 1}({s_\theta }{s_{\phi /2}})$ and $\delta_2(\theta ,\psi ,\phi ,\Theta ) \equiv {\sin ^{ - 1}}[{s_{\phi /2}}\sqrt {{{({c_\Theta }{c_\psi }{s_\theta } - {c_\theta }{s_\Theta })}^2} + {{({s_\theta }{s_\psi })}^2}} ]$. 

Furthermore, the form of $\delta_*$ determines the resulting sequence. When $\delta_* = \delta_1(\theta,\phi)$, Eq.~\eqref{general} can be constructed by the sequence: $U(\theta,\psi,\phi)=R(\hat{z},*)R(\hat{m},*)R(\hat{z},*)\cdots$; when $\delta_*=\delta_2(\theta,\psi,\phi,\Theta)$, Eq.~\eqref{general} can be constructed in the form of $U(\theta,\psi,\phi)=R(\hat{m},*)R(\hat{z},*)R(\hat{m},*)\cdots$.

(ii) For the even-piece decomposition where $p= 2l$, the decomposition in Eq.~\eqref{general} can be satisfied for a given rotation, {\it if and only if}
\begin{equation}
 \Lambda_* \leqslant \Theta(l - 1) \label{eq:even} ,
\end{equation}
where $\Lambda_* \equiv \min\{ \Lambda\left( \theta, \psi, \phi, \Theta \right), \Lambda\left( \theta, \psi, -\phi, \Theta \right) \}$ and $\Lambda(\theta,\psi,\phi,\Theta) \equiv \sin^{-1} \sqrt{ \frac{A + B}{2} - \sqrt{C^2 + \frac{(B - A)^2}{4}} }$. The other variables are defined as follows: $A \equiv {\left( {{c_\psi }{c_\Theta }{s_\theta }{s_{\phi /2}} - {s_\Theta }{c_\theta }{s_{\phi /2}}} \right)^2} + {\left( {{s_\psi }{c_\Theta }{s_\theta }{s_{\phi /2}} - {s_\Theta }{c_{\phi /2}}} \right)^2}$, $B \equiv \left( s_\theta s_{\phi /2} \right)^2$, and $C \equiv
{s_\Theta }{s_\theta }{s_{\phi /2}}({s_\psi }{s_{\phi /2}}{c_\theta } - {c_\psi }{c_{\phi /2}})$. 
The constructions of the decompositions are discussed in Appendix~\ref{sec:fix}.

\textit{Minimum number of pieces for all possible $U$}---From the experimental point of view, it is of interest to determine the optimal number of pieces applicable for {\it all possible} unitary transformations, i.e., 
\begin{equation}
{q_{\min }} \equiv \mathop {\max }\limits_U  \ {p_{\min }}\left( U \right)  \ .
\end{equation}
In principle, the values of ${q_{\min }}$ for Type-I and Type-II qubits can be different, as they are subject to different physical constraints. However, as shown below, they are actually identical. 

 It is known that~\cite{Lowenthal.71,Hamada.14}, for Type-II qubits, all rotations can be decomposed to $p\geqslant3$ pieces, \textit{if and only if} $\Theta\geqslant\pi/(p-1)$ (see Appendix~\ref{sec:min} for alternative proof), which implies
\begin{equation}
{q_{\min }} = \left\lceil {\frac{\pi }{\Theta }} \right\rceil  + 1. \label{all}
\end{equation}

In particular, for EO qubits, where two available rotation axes are fixed with relative angle $\Theta=\pi/3$, our results imply that the minimum number of pieces is given by $q_{\min}=4$, which represents an analytic explanation to the numerical results obtained by Divincenzo et al in 2000~\cite{Divincenzo.00}.  

For Type-II qubits, we know from Theorem.~\ref{th:1} that when $\Theta\in(0,\pi)$, $q_{\min}$ is the same as Type-I. When $\Theta=\pi$, criteria (iv) for $p=2$ indicates that $q_{\min}=2$, so Eq.~\eqref{all} also holds for Type-I qubits. An illustration of $q_{\min}$ is given in Fig.~\ref{fig:rot}~(c).

\textit{Improving  ST qubits control}---  By minimizing the number of pieces, one can reduce the error introduced by imperfect control field switching. In the existing ST qubits literatures~\cite{Ramon.11,Zhang.17,Throckmorton.17}, the single-qubit gates are typically decomposed into five or more pieces; our results show that as long as $\Theta\geqslant \pi/3$, all target rotation can be decomposed to four or even less number of pieces [see Eq.~\eqref{all} below]. Specifically, when $J_{\max}=30h$ which is a typical experiment value \cite{Martins.16,Reed.16}, we have found that for the set of 24 Clifford gates, $10$ gates can be realized with $p_{\min} (U)\leqslant2$, and $13$ gates with $p_{\min} (U)=3$. 

However, operations with $J\simeq0$ is slow and may suffer from severe nuclear noise \cite{Zhang.17}, solely minimizing the number of pieces is not optimum. To avoid operations with $J\simeq0$ while using as small number of pieces as possible, we propose the following decomposition strategy. Given maximum coupling strength $J_{\max}$,  we restrict $J\in[J_{\max},J_{\min}]$, where $ J_{\min}/h=\tan(\arctan J_{\max}/h-\pi/3)$. 
This ensures the axes can vary in a range with $\Theta=\pi/3$, and $p_{\min}\leqslant 4$. For a given target rotation, we decompose it with $p=1$ or $p=2$ if such solutions exist. Otherwise, the decomposition with $p=3,4$ is realized with fixed axes at the boundary corresponding to $J=J_{\max}$ and $J=J_{\min}$ (and gate time are optimized).

We compare our minimal decomposition scheme to (i) five-piece Ramon sequence~\cite{Ramon.11}, realized by alternating couplings between $J_{\max}$ and $0$, and (ii) an alternative scheme~\cite{Zhang.17} designed for avoiding operations for the $J=0$ case.  Remarkably, the average gate time for Clifford gate is $46\%$ and $71\%$ shorter compared to~\cite{Zhang.17} and~\cite{Ramon.11} respectively. Moreover, Fig.~\ref{fig:rot} (d)-(f) shows the comparison of gate time for several target unitary gates with $p_{\min}(U)=4$ [$U(\theta,\pi/2,\phi)$ of different values of $\theta$ and $\phi$]. For these family of gates, our minimal decomposition scheme has on average $48\%$ and $56\%$ shorter gate time relative to~\cite{Zhang.17} and~\cite{Ramon.11} respectively.

To further study the robustness, we perform randomized benchmarking with Gaussian static noise. The nuclear spin noise $\delta h$ are drawn from $\mathcal{N}(0,\sigma_{h}^2)$, and the charge noise are drawn from $\mathcal{N}(0,\sigma_{J}^2/J^2)$.  Fig.~\ref{fig:rb} shows the average gate fidelity for different values of $\sigma_J$ and $\sigma_h$  corresponding to barrier control [Fig.~\ref{fig:rb} (a), (b)] or tilt control [Fig.~\ref{fig:rb} (c), (d)] of GaAs quantum dots~\cite{Martins.16}. The results show that our minimal decomposition scheme can provide improvement in the robustness for real experimental circumstance. We note that it is also possible to allow larger $p_{\min}$ and let the axes farther away from $x$-axis, or even allow continuous tuning of the control fields~\cite{Castelano.18}. Finding optimal control scheme remains an open question, which beyond the scope of this work.

Furthermore, the control scheme should be designed on a case-by-case basis. For example, qubits hosted in isotropic purified material~\cite{Yoneda.18,Muhonen.15,Chan.18} has negligible nuclear noise, so the rotations along $x$-axis are no longer unfavored. In this case, the error introduced by imperfect control field switching becomes important.

To conclude, we have studied the minimal decomposition for two types of qubits: rotation axes are restricted in a range of a plane (Type-I), and rotation axes are fixed at two directions (Type-II). We also present an explicit procedure for minimally applying the elementary gates for an arbitrary single-qubit transformation. Furthermore, we discuss the implications of minimal decomposition for ST qubit, providing numerical evidences showing the effectiveness and robustness of our decomposition. Finally, we provide a code online~\cite{sm} for experimentalists, who just need to input a target rotation; the code will generate the explicit minimal decomposition.
The combination of our work with dynamical decoupling~\cite{Wimperis.94,Wang.12,Wang.14} or geometric control~\cite{Duan.01,Liu.18,Yan.18} can be interesting in the future.

\section*{Acknowledgements}
We thank Chengxian Zhang for helpful discussion. This work is supported by the National Natural Science Foundation of China (No. 11875160, No. 11604277), the NSFC
Guangdong Joint Fund (U1801661), the Guangdong Innovative and Entrepreneurial Research Team Program (No.~2016ZT06D348), the Research Grants Council of the Hong Kong Special Administrative Region, China (No.~CityU 21300116, CityU 11303617, CityU 11304018),  Natural Science Foundation of Guangdong Province (2017B030308003), and the Science, Technology and Innovation Commission of Shenzhen Municipality (JCYJ20170412152620376, JCYJ20170817105046702, ZDSYS201703031659262).
\begin{appendix}

\section{Definition}
To facilitate the discussions, for $R(\hat{n},\phi)$ with $\hat{n}=(\sin\theta\cos\psi,\sin\theta\sin\psi,\cos\theta)$, we parametrize it as:
\begin{align}
 R(\hat{n},\phi) &\equiv R(\theta,\psi,\phi) \notag\\
& \equiv \left[\begin{array}{cc}
	\cos \frac{\phi}{2} - i \sin \frac{\phi}{2} \cos \theta & - i \sin
	\frac{\phi}{2} \sin \theta e^{- i \psi}\\
	- i \sin \frac{\phi}{2} \sin \theta e^{i \psi} & \cos \frac{\phi}{2} + i
	\sin \frac{\phi}{2} \cos \theta
	\end{array}\right],  \label{SU2}
\end{align}
where $\theta\in[0,\pi)$, $\psi\in[0,\pi)$, $\phi\in[0,4\pi)$ unless otherwise specified. For clarity, we represent all \textit{target} unitary transformation as $U(\theta,\psi,\phi)\equiv R(\theta,\psi,\phi)$. Inversely, given
$R \left( \theta,\psi,\phi \right)= \left[\begin{array}{cc}
	e_{11} &e_{12}\\
	e_{21} &e_{22}
	\end{array}\right]$,
one can calculate angles as follows, which are important for the actual construction of the decomposition:

If $\text{Re} (e_{21})>0$,
\begin{subequations}
\begin{align}
\psi&=\text{Arg}(ie_{21}),\\
\phi&=2\arccos[\text{Re}(e_{11})],\\
\theta&=\left\{ 
\begin{array}{ll}
\arccos\frac{-\text{Im}(e_{11})}{\sin(\phi/2)},&\phi\neq0, 2\pi\\
0,&\phi=0\ \mathrm{ or }\ 2\pi
\end{array}
\right.
\end{align}\label{eq2}
\end{subequations}
If $\text{Re} (e_{21})<0$,
\begin{subequations}
\begin{align}
\psi&=\text{Arg}(-ie_{21}),\\
\phi&=4\pi-2\arccos[\text{Re}(e_{11})],\\
\theta&=\left\{ 
\begin{array}{ll}
\arccos\frac{-\text{Im}(e_{11})}{\sin(\phi/2)},&\phi\neq0,2\pi\\
0,&\phi=0\ \mathrm{ or }\ 2\pi
\end{array}
\right.
\end{align}\label{eq3}
\end{subequations}
If $\text{Re} (e_{21})=0$, 
\begin{subequations}
\begin{align}
\psi&=0,\\
\theta&=\text{Arg}[-\text{Im}(e_{11})-e_{21}  ],\\
\phi&=\left\{   
\begin{array}{ll}
 2\text{Arg}\left[\text{Re}(e_{11})-i\frac{\text{Im}(e_{11})}{\cos\theta} \right],&\theta\neq\pi/2  \\
         2\text{Arg}(e_{11}-e_{21}), &\theta=\pi/2
         \end{array}
         \right.
\end{align}\label{eq4}
\end{subequations}

 Furthermore, we define the set for \textit{all possible} rotations as:
\begin{subequations}
\begin{align}
\mathcal{A}&\equiv\{ R(\theta,\psi,\phi)| \theta\in[0,\pi),\psi\in[0,\pi),\phi\in[0,4\pi)  \}.
\end{align}
For both $\Theta\in(0,\pi]$ for Type I and $\Theta\in(0,\pi/2]$ for Type II qubits, we define several sets of rotation with $\phi\in[0,4\pi)$:
\begin{align}
\mathcal{G}_{p}&\equiv\{R(\theta,0, \phi)|\theta\in[0,\pi), \phi\in(0,4\pi) \}, \\
\mathcal{G}_{\xi}&\equiv\{ R(\theta,0,\phi)| \theta\in[0,\Theta], \phi\in[0,4\pi)  \}, \\
\mathcal{G}_{z}&\equiv\{ R(\hat{z},\phi)|\phi\in[0,4\pi)  \}, \quad\text{(all $z$ rotations)} \\
\mathcal{G}_{m}&\equiv\{ R(\hat{n},\phi)|\hat{n}=(\sin\Theta,0,\cos\Theta), \phi\in[0,4\pi) \},\\
\mathcal{G}_{b}&\equiv\mathcal{G}_{z}\cup\mathcal{G}_{m},
\end{align}
and rotation with $\phi\in(0,2\pi)$:
\begin{align}
\mathcal{S}_{p}&\equiv\{R(\theta,0, \phi)|\theta\in[0,\pi), \phi\in(0,2\pi) \},\\
\mathcal{S}_{p'}&\equiv\{R(\theta,0, \phi)|\theta\in(0,\pi/2), \phi\in(0,2\pi) \},\\
\mathcal{S}_{\xi}&\equiv\{R(\theta,0, \phi)|\theta\in[0,\Theta], \phi\in(0,2\pi) \},\\
\mathcal{S}_{m}&\equiv\{R(\hat{n}, \phi)|\hat{n}=(\sin\Theta,0,\cos\Theta), \phi\in(0,2\pi) \},\\
\mathcal{S}_{z}&\equiv\{R(\hat{n}, \phi)|\hat{n}=\hat{z}, \phi\in(0,2\pi) \}.
\end{align}
\end{subequations}
Furthermore, given two sets $\mathcal{G}_1, \mathcal{G}_2$, we define the product of them as:
	\begin{equation}
	\mathcal{G}_{1}\mathcal{G}_2\equiv\{ R=R_{1}R_2|R_{1}\in\mathcal{G}_1,R_{2}\in\mathcal{G}_2  \},
	\end{equation}
and for a set $\mathcal{G}$, we define the $p$-power of it as
	\begin{equation}
	\mathcal{G}^p\equiv\{ R=\prod_{i=1}^{p}R_{i}|R_{i}\in\mathcal{G}\}.
	\end{equation}

\section{ Axes restricted in a range}\label{app:B}
Here, we are given  axes that are allowed to vary in a range: $\hat{n}_i=(\sin\theta,0,\cos\theta)$, where $\theta\in[0,\Theta]$, with $\Theta\in(0,\pi]$. We will give the condition for decompositions to exist, and discuss how these decompositions can be constructed or reduced to a Type II qubit case.

\subsection{Lemmas}
We first provide several useful lemmas. To begin with, we show that arbitrary rotations can be decomposed into a $z$-rotation and another rotation with axis in the $x$-$z$ plane. 
\begin{lemma}\label{le:two}
Given any $U(\theta,\psi,\phi)\in\mathcal{A}$, there exist certain $R^z_{1,2}=R(\hat{z},\phi_{1,2})\in\mathcal{G}_z, R_-=R(\theta_-,0,\phi_-)\in\mathcal{G}_{p}, R_+=R(\theta_+,0,\phi_+)\in\mathcal{G}_{p}$, such that 

\begin{subequations}
\begin{equation}
U(\theta,\psi,\phi)=R_1^zR_-,
\end{equation}
and 
\begin{equation}
U(\theta,\psi,\phi)=R_+R_2^z.
\end{equation}\label{decomp2_0}
\end{subequations}

\end{lemma}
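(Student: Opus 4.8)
The plan is to argue directly from the $SU(2)$ matrix parametrization \eqref{SU2}. Write the target as $U(\theta,\psi,\phi)=\left[\begin{smallmatrix} a & b\\ -\bar b & \bar a\end{smallmatrix}\right]$ with $|a|^2+|b|^2=1$, and note that a $z$-rotation is $R(\hat z,\phi_1)=\mathrm{diag}\!\left(e^{-i\phi_1/2},e^{i\phi_1/2}\right)$, while by \eqref{SU2} a rotation with axis in the $x$-$z$ plane, $R(\theta_-,0,\phi_-)$, is precisely an $SU(2)$ matrix whose two off-diagonal entries are equal and purely imaginary. For the first identity I would look for $\phi_1$ such that $R_-:=R(\hat z,-\phi_1)\,U$ lands in $\mathcal{G}_p$; then automatically $R(\hat z,\phi_1)\,R_-=U$. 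Left-multiplication by $R(\hat z,-\phi_1)$ multiplies the first row of $U$ by $e^{i\phi_1/2}$ and the second by $e^{-i\phi_1/2}$, so the $(1,2)$ entry becomes $e^{i\phi_1/2}b$; its modulus is $|b|$ and its argument sweeps a full circle as $\phi_1$ runs over $[0,4\pi)$, so there is a choice of $\phi_1$ making it purely imaginary.

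Next I would check that this single condition suffices. If $R(\hat z,-\phi_1)U=\left[\begin{smallmatrix} a' & b'\\ -\bar b' & \bar a'\end{smallmatrix}\right]$ has $b'$ purely imaginary, then, writing $a'=c-id$ and $b'=-ie$ with $c,d,e$ real, one has $-\bar b'=-ie=b'$ as well, so both off-diagonal entries equal $-ie$ and the matrix has exactly the shape of $R(\theta_-,0,\phi_-)$ with $\cos(\phi_-/2)=c$, $\sin(\phi_-/2)\cos\theta_-=d$, $\sin(\phi_-/2)\sin\theta_-=e$; since $c^2+d^2+e^2=1$ these are solvable with $\theta_-\in[0,\pi)$ and $\phi_-\in(0,4\pi)$, the extra range $\phi_-\in(2\pi,4\pi)$ being used to let $\sin(\phi_-/2)$ carry whatever sign is needed so that $\sin\theta_-\ge 0$ — this is exactly the inversion recorded in \eqref{eq2}–\eqref{eq4}. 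The only degenerate point is $\phi_-=0$ (which would make $R_-$ the identity and hence not in $\mathcal{G}_p$); this can occur only when $U$ itself is a pure $z$-rotation (so $b=0$), in which case any $\phi_1$ works and one simply picks one with $\phi_-\ne 0$. This gives $U=R_1^z R_-$ with $R_1^z=R(\hat z,\phi_1)\in\mathcal{G}_z$ and $R_-\in\mathcal{G}_p$.

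The second identity I would obtain by applying the first to $U^\dagger$, which again lies in $\mathcal{A}$ since $R(\theta,\psi,\phi)^\dagger=R(\theta,\psi,4\pi-\phi)$ (and $=I$ when $\phi=0$); writing $U^\dagger=R_1^z R_-$ and taking adjoints yields $U=R_-^\dagger(R_1^z)^\dagger$, and because $\mathcal{G}_p$ and $\mathcal{G}_z$ are each closed under adjunction (again via $\phi\mapsto 4\pi-\phi$) we may set $R_+:=R_-^\dagger\in\mathcal{G}_p$ and $R_2^z:=(R_1^z)^\dagger\in\mathcal{G}_z$. The main — and really only — obstacle is the careful bookkeeping of the angular ranges: verifying that the extracted $\theta_-,\phi_-$ (and, for the second identity, $\theta_+,\phi_+$) can always be brought into $[0,\pi)$ and $(0,4\pi)$ respectively, and dispatching the measure-zero degenerate configurations ($b=0$, or $\sin(\phi_-/2)=0$). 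None of this is conceptually hard; it is precisely the content of the inversion formulas \eqref{eq2}–\eqref{eq4}, which I would invoke rather than rederive.
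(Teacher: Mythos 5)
Your proof is correct and follows essentially the same route as the paper's: both factor out a $z$-rotation so that the remainder has equal, purely imaginary off-diagonal entries and hence lies in $\mathcal{G}_p$; the paper simply records the resulting closed-form angles ($\phi_1=2\psi$, $\theta_\pm=\operatorname{arccot}(\cdots)$, etc.) and asks the reader to verify them, whereas you argue existence via the sweeping phase of $e^{i\phi_1/2}b$ and obtain the second identity by taking adjoints of the first. Your treatment of the degenerate case is in fact slightly more careful than the paper's, whose Case I sets $\phi_\pm=0$ even though $\mathcal{G}_p$ as defined requires $\phi\in(0,4\pi)$.
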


\begin{proof}

 \textbf{Case I}: $\phi\in\{0,2\pi\}$ or $\theta=0$
        
       Eq.~\eqref{decomp2_0} can be satisfied by taking  $\phi_{1,2}=\phi$ and $\phi_\pm=0$. 

 \textbf{Case II}: $\phi\notin\{0,2\pi\}$ and $\theta\neq0$

It can be verified that Eq.~\eqref{decomp2_0} can be \textbf{uniquely} constructed as
 \begin{widetext}
\begin{subequations}\label{tc}
\begin{align}
\theta_{\pm}&=\rm{arccot} \left(\frac{\pm\sin\psi\cos\frac{\phi}{2} + \cos\psi\sin\frac{\phi}{2}\cos\theta }{\sin\frac{\phi}{2}\sin\theta} \right),\label{tca}\\
\phi_\pm&=2\pi+\left[ 2\arccos \left( \cos\frac{\phi}{2}\cos\psi\mp\sin\frac{\phi}{2}\sin\psi\cos\theta\right)  - 2\pi \right]\text{sgn}\left(\sin\frac{\phi}{2} \right),\label{tcb}\\
\phi_1&=2\psi, \label{tcc}\\
\phi_2&=-2\psi~\rm{mod}~4\pi.
\end{align}
\end{subequations}
\end{widetext}
\end{proof}~\\
         
In the following, we discuss the decomposition of the product of two rotations in $\mathcal{S}_\xi$.
	\begin{lemma}\label{le:unq}
	given $U_1=U(\theta_1,0,\phi_1)\in \mathcal{S}_{\xi}$, $U_2=U(\theta_2,0,\phi_2)\in \mathcal{S}_{\xi}$ with $\theta_1<\theta_2$, and $\theta_3\in[0, \theta_1]$, there exist \textbf{unique} value of $\phi_3$, and \textbf{unique} $R(\theta_4,0,\phi_4)\in \mathcal{S}_{p'}$, such that
	\begin{equation}
	U_1 U_2 =  R(\theta_3,0,\phi_3) R(\theta_4,0,\phi_4),
	 \end{equation}
	and $\theta_4\neq\theta_3$.

	\end{lemma}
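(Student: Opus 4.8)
\textbf{Proof plan for Lemma~\ref{le:unq}.}

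The plan is to reduce the product of two rotations whose axes lie in the $x$–$z$ plane (at angles $\theta_1<\theta_2$ from $\hat z$) to a product of two rotations whose first axis is at a prescribed smaller angle $\theta_3\in[0,\theta_1]$. First I would write out $M\equiv U_1U_2$ explicitly using the parametrization in Eq.~\eqref{SU2} with $\psi=0$; since both factors have real off-diagonal entries up to the factors $-i$, the product $M$ is a specific $SU(2)$ matrix whose entries are known trigonometric expressions in $\theta_1,\theta_2,\phi_1,\phi_2$. The key structural observation is that $M$ still has $\psi=0$ in the sense that its off-diagonal entry $M_{21}$ is purely imaginary — indeed, rotations about axes in the $x$–$z$ plane form the subgroup generated by $\sigma_x,\sigma_z$, i.e. matrices of the form $\begin{bmatrix} a-id & -ic \\ -ic & a+id\end{bmatrix}$ with $a,c,d$ real and $a^2+c^2+d^2=1$; this class is closed under multiplication, so $M$ has exactly this form. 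This puts us in the case $\psi=0$ and lets us apply the inversion formulas \eqref{eq4}.

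Next I would invoke Lemma~\ref{le:two}: applied to $M$ (viewed as a target $U$), its second decomposition $U=R_+R_2^z$ writes $M=R(\theta_+,0,\phi_+)\,R(\hat z,\phi_2)$, and the construction \eqref{tc} is \emph{unique} in Case II. But here I want the first axis to be $\theta_3$, not $\theta_+$; so instead I would work directly. Write the desired identity as $R(\theta_3,0,\phi_3)^{-1} M = R(\theta_4,0,\phi_4)$. The left side is a fixed element of the $\psi=0$ subgroup once $\phi_3$ is chosen; I need to choose $\phi_3$ so that the left side has the form of a single rotation about \emph{some} axis in the plane — which any such $SU(2)$ element automatically is, with its axis angle $\theta_4$ and angle $\phi_4$ read off from \eqref{eq4}. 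So actually the product is \emph{always} a rotation about a planar axis; the content of the lemma is the \textbf{uniqueness} of $\phi_3$ and the fact that $\theta_4\ne\theta_3$ and $\theta_4\in(0,\pi/2)$. For uniqueness: parametrize $R(\theta_3,0,\phi_3)=\cos(\phi_3/2)\,\mathbb{1}-i\sin(\phi_3/2)(s_{\theta_3}\sigma_x+c_{\theta_3}\sigma_z)$ and demand that $R(\theta_3,0,\phi_3)^{-1}M$, when written as $a'\mathbb{1}-i(b'\sigma_x+d'\sigma_z)$, corresponds to a legitimate $R(\theta_4,0,\phi_4)\in\mathcal S_{p'}$, i.e. with $\phi_4\in(0,2\pi)$ and $\theta_4\in(0,\pi/2)$. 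The requirement $\phi_4\in(0,2\pi)$ fixes the sign of $a'$ (equivalently forces $\sin(\phi_4/2)>0$ with the correct branch), and together with $\theta_4\ne\theta_3$ this pins down $\phi_3$ uniquely: the two unknowns $(\phi_3,\phi_4)$ and $\theta_4$ are constrained by the four real equations coming from matching the $\mathbb 1,\sigma_x,\sigma_y,\sigma_z$ components (three independent on $SU(2)$), leaving a one-parameter family that the range constraints cut down to a point.

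The cleanest route, which I would actually carry out, is: apply Lemma~\ref{le:two} not to $M$ but in the form "given any planar-axis $V$ and any target angle $\theta_3\le$ (first axis angle), the decomposition $V=R(\theta_3,0,\ast)R(\theta_4,0,\ast)$ exists uniquely" — this is essentially Eq.~\eqref{tc} with $\psi=0$ and the roles arranged so the prescribed axis is the \emph{left} factor. Concretely, set $V=U_1U_2$; since $V$ has $\psi=0$, comparing $V=R(\theta_3,0,\phi_3)R(\theta_4,0,\phi_4)$ componentwise gives $\cos(\phi_3/2)\cos(\phi_4/2)-\sin(\phi_3/2)\sin(\phi_4/2)(c_{\theta_3}c_{\theta_4}+s_{\theta_3}s_{\theta_4})$ for the $\mathbb 1$-part and analogous expressions for the $\sigma_x,\sigma_z$-parts; solving, $\phi_3$ is determined by $\cot$ of a ratio of entries of $V$ exactly as in \eqref{tca}–\eqref{tcb}, hence unique, and then $\theta_4,\phi_4$ follow uniquely. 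I expect the main obstacle to be the \textbf{boundary bookkeeping}: verifying that $\theta_4$ genuinely lands in the open interval $(0,\pi/2)$ (not $\{0\}$ and not $\ge\pi/2$) and that $\theta_4\ne\theta_3$, given the hypotheses $\theta_1<\theta_2$, $\theta_3\le\theta_1$, and $\phi_{1},\phi_2\in(0,2\pi)$. This should follow because the composite rotation $V$ has a well-defined axis strictly "between" the extreme directions involved — a convexity/monotonicity argument on the Bloch-sphere geodesics — and because $\theta_3<\theta_2$ prevents the residual rotation from collapsing onto the $\hat z$-axis; the degenerate cases ($\phi$'s summing to multiples of $2\pi$, or $V=\pm\mathbb 1$) must be excluded by the strict inequality $\theta_1<\theta_2$ together with $\phi_1,\phi_2\in(0,2\pi)$. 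Handling these edge cases carefully, rather than the central algebra, is where the real work lies.
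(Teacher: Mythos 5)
Your central structural claim is false, and the rest of the argument leans on it. You assert that rotations about axes in the $x$--$z$ plane ``form the subgroup generated by $\sigma_x,\sigma_z$'' and that the class of matrices $a\mathbb{1}-i(c\sigma_x+d\sigma_z)$ is closed under multiplication, so that $M=U_1U_2$ again has $\psi=0$. It is not closed: for $a_j\mathbb{1}-i(c_j\sigma_x+d_j\sigma_z)$ the product contains the term $(c_1\sigma_x+d_1\sigma_z)(c_2\sigma_x+d_2\sigma_z)$, whose antisymmetric part $(c_1d_2-d_1c_2)\sigma_x\sigma_z=-i(c_1d_2-d_1c_2)\sigma_y$ generates a $\sigma_y$ component whenever the two axes differ and both angles are nontrivial (e.g.\ $R(\hat x,\pi/2)R(\hat z,\pi/2)$ has axis $(1,-1,1)/\sqrt3$). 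Indeed, the subgroup generated by planar-axis rotations is all of $SU(2)$ --- that is exactly why $z$-$x$-$z$ decompositions exist. Consequently $U_1U_2$ generically has $\tilde\psi\neq0$, the inversion formulas \eqref{eq4} for the $\psi=0$ case do not apply, your statement that ``the product is always a rotation about a planar axis'' is wrong, and your ``cleanest route'' (componentwise matching of $V=U_1U_2$ assuming $V$ has $\psi=0$) starts from a false premise. The paper's proof instead conjugates by $R(\hat y,\theta_3)$ so that the prescribed axis becomes $\hat z$, and then applies Lemma~\ref{le:two} --- which is stated for \emph{arbitrary} $\psi$ --- to peel off the $z$-rotation from the genuinely three-dimensional product $U(\tilde\theta,\tilde\psi,\tilde\phi)$; the explicit formulas \eqref{tc} then give the range statements $\tilde\theta_4\in(0,\pi/2)$, $\phi_3,\phi_4\in(0,2\pi)$.

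Beyond this, the two points you yourself identify as ``where the real work lies'' are left undone, and they are precisely the content of the lemma. The non-degeneracy ($\theta_4\neq\theta_3$, $\phi_3,\phi_4\neq 0$) follows in the paper from the single computation $|\mathrm{Re}[a_{12}]|=|\sin\frac{\phi_1}{2}\sin\frac{\phi_2}{2}\sin(\tilde\theta_1-\tilde\theta_2)|>0$, which uses exactly the hypotheses $\theta_1\neq\theta_2$ and $\phi_{1,2}\in(0,2\pi)$; your appeal to ``a convexity/monotonicity argument on Bloch-sphere geodesics'' is not a substitute. Likewise, uniqueness is not obtained by counting equations against unknowns and hoping the range constraints ``cut down to a point''; the paper proves it by equating two putative decompositions, forming $R(\theta_3,0,\phi_3-\phi_3')=R(\theta_4',0,\phi_4')R(\theta_4,0,-\phi_4)$, and reading off $\mathrm{Re}[e_{12}]=0=\sin\frac{\phi_4'}{2}\sin\frac{\phi_4}{2}\sin(\theta_4'-\theta_4)$ to force $\theta_4'=\theta_4$ and then $\phi_3=\phi_3'$, $\phi_4=\phi_4'$. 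As it stands your proposal neither establishes existence (wrong structural premise), nor $\theta_4\neq\theta_3$, nor uniqueness.
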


\begin{proof}	
        
	\textit{Existence of $\phi_3$ and $R(\theta_4,0,\phi_4)$}: 
	
	Let $\tilde{\theta}_{1,2}=\theta_{1,2}-\theta_3\in [0,\Theta]$, and define
		\begin{equation}
	 U(\tilde{\theta}_1,0,\phi_1)U(\tilde{\theta}_2,0,\phi_2)=U(\tilde{\theta},\tilde{\psi},\tilde{\phi})=\left[\begin{array}{cc}
	a_{11} & a_{12}\\
	a_{21} &a_{22}
	\end{array}\right]\label{tilde_def}.
	 \end{equation}
         According to Lemma \ref{le:two}, there exist certain $R(0,0,\phi_3)\in\mathcal{G}_z$, $R(\tilde{\theta}_4,0,\phi_4)\in \mathcal{G}_p$, such that 
         
         \begin{equation}
         U(\tilde{\theta},\tilde{\psi},\tilde{\phi})= R(0,0,\phi_3) R(\tilde{\theta}_4,0,\phi_4).\label{tilde_rel}
         \end{equation}
         Since 
         \begin{equation}
         \left|\text{Re}[a_{12}] \right|= \left |\sin\frac{\phi_1}{2}\sin\frac{\phi_2}{2}\sin \tilde{\theta}_1-\tilde{\theta}_2 \right|>0, \label{re12}
         \end{equation}
         we have $\tilde{\phi}\in(0,2\pi)$, and $\tilde{\theta}\neq0$. And combining Eq.~\eqref{tilde_def},~Eq.~\eqref{tilde_rel}, and Eq.~\eqref{tc}, after some calculation, one can verify that 
         \begin{subequations}
         \begin{align}
         \tilde{\theta}_4\in(0,\pi/2),\\
         \phi_3\in(0,2\pi),\\
         \phi_4\in(0,2\pi).
         \end{align}
         \end{subequations}
        
         \begin{widetext}
         Then, we apply a transformation on Eq.~\eqref{tilde_rel}   $S\rightarrow R(\hat{y},\theta_3)R \tilde{S} R(\hat{y},-\theta_3)$, which then becomes
         \begin{align}
         U(\tilde{\theta}_1+\theta_3,0,\phi_1)U(\tilde{\theta}_2+\theta_3,0,\phi_2)&= R(\theta_3,0,\phi_3) R(\tilde{\theta}_4+\theta_3,0,\phi_4)\notag\\
         U(\theta_1,0,\phi_1)U(\theta_2,0,\phi_2)&= R(\theta_3,0,\phi_3) R(\theta_4,0,\phi_4),
         \end{align}        
         where $\theta_4=\tilde{\theta}_4+\theta_3\in[0,\pi)$. So obviously, $R(\theta_3,0,\phi_3)\in \mathcal{S}_{xz}, R(\theta_4,0,\phi_4)\in \mathcal{S}_{p'}$.  
         $\\$
         
         \textit{$\theta_4\neq\theta_3$}:
         
          We denote $R(\tilde{\theta}_4,0,\phi_4)=\left[\begin{array}{cc}
	b_{11} & b_{12}\\
	b_{21} &b_{22}
	\end{array}\right]$. According to Eq.~\eqref{tilde_def}, and  Eq.~\eqref{tilde_rel},  we have
         \begin{equation}
         \left| \sin\frac{\phi_4}{2}\sin\tilde{\theta}_4 \right|=|b_{12}|=|a_{12}|\geqslant \left|\text{Re}[a_{12}] \right|>0.
         \end{equation}
          Therefore, we have $\tilde{\theta}_4\neq 0$, which means $\theta_4\neq\theta_3$.

	\textit{Uniqueness}:
	Suppose 
	\begin{align}
	 U(\theta_1,0,\phi_1) U(\theta_2,0,\phi_2) = R(\theta_3,0,\phi_3) R(\theta_4,0,\phi_4) = R(\theta_3,0,\phi'_3) R(\theta'_4,0,\phi'_4),\label{eq:5}
	 \end{align}
        for some $\phi_{3,4}\in(0,2\pi), \phi'_{3,4}\in(0,2\pi)$, and $\theta_4\neq\theta_{3}, \theta'_4\neq\theta_{3}$. We can denote
	\begin{equation}
	\left[\begin{array}{cc}
	e_{11} &e_{12}\\
	e_{21} &e_{22}
	\end{array}\right]=R(\theta_3,0,\phi_3-\phi'_3)= R(\theta'_4,0,\phi'_4)R(\theta_4,0,-\phi_4).\label{eq:6}
	\end{equation}
	One can find that   
	\begin{equation}
	\text{Re}[e_{12}]=\text{Re}\left[-i\left(\sin\frac{\phi_3-\phi'_3}{2}\sin\theta_3\right)e^{-i0}\right]=0=\sin\frac{\phi'_4}{2}\sin\frac{\phi_4}{2}\sin(\theta'_4-\theta_4).
	\end{equation}
	 	 \end{widetext}
	 And since $\phi_4,\phi_4' \neq 0$, we have 
	\begin{equation}
	\theta'_4=\theta_4.
	\end{equation}
	 So Eq.~\eqref{eq:6} becomes
	\begin{equation}
	R(\theta_3,0,\phi_3-\phi'_3)= R(\theta_4,0,\phi'_4-\phi_4).
	\end{equation}
	Since $\theta_4\neq\theta_3$, and $\phi_{3,4}\in(0,2\pi), \phi'_{3,4}\in(0,2\pi)$, we have 
	\begin{subequations}
	\begin{align}
	\phi_3=\phi'_3,\\
	\phi_4=\phi'_4.
	\end{align}
	\end{subequations}
	 Therefore, the values of $\theta_3,\phi_3$, $\phi_4$ are unique.
	\end{proof}
	
	~\\

\begin{lemma}\label{le:bound1}

Given $U_{1}=U(\theta_1,0,\phi_1)\in \mathcal{S}_{\xi}$, $U_{2}=U(\theta_2,0,\phi_2)\in \mathcal{S}_{\xi}$, with $\theta_1\neq\theta_2$, there exist certain $R^m\in\mathcal{R}_m$, $R^z\in\mathcal{R}_z$ and $R^\xi \in \mathcal{S}_{\xi}$, such that 

(i) if $\theta_1<\theta_2$
  \begin{equation}
  U_1U_2=R^zR^\xi,
  \end{equation}
  
(ii) if $\theta_1<\theta_2$
\begin{equation}
U_1U_2=R^mR^{\xi}.
\end{equation}

\end{lemma}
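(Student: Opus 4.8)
The plan is to reduce this to Lemma \ref{le:unq}, which already furnishes a decomposition $U_1 U_2 = R(\theta_3,0,\phi_3)\,R(\theta_4,0,\phi_4)$ with a free choice of $\theta_3 \in [0,\min\{\theta_1,\theta_2\}]$ (and correspondingly $\theta_4 \in (\theta_3,\pi/2)$ after the conjugation-by-$R(\hat y,\theta_3)$ trick). The two cases (i) and (ii) amount to picking the endpoint of the allowed range for $\theta_3$: to get a leading $z$-rotation $R^z \in \mathcal{R}_z$ we set $\theta_3 = 0$, and to get a leading $\hat m$-rotation $R^m \in \mathcal{R}_m$ we want the first factor to sit on the $\hat m$ axis, i.e. $\theta_3 = \Theta$. (I assume the two hypotheses ``$\theta_1 < \theta_2$'' in (i) and (ii) are a typo and (ii) should read $\theta_1 > \theta_2$, or equivalently that one relabels $U_1 \leftrightarrow U_2$; the argument is symmetric under that swap, so I will treat the case $\theta_1 < \theta_2$ and obtain (i), then apply the mirror argument for (ii).)

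First I would handle case (i): apply Lemma \ref{le:unq} with $\theta_3 = 0$. Since $0 \le \theta_1 = \min\{\theta_1,\theta_2\}$, the hypothesis of Lemma \ref{le:unq} is met, and it produces a unique $\phi_3 \in (0,2\pi)$ and a unique $R(\theta_4,0,\phi_4) \in \mathcal{S}_{p'}$ with $\theta_4 \neq 0$. The first factor $R(0,0,\phi_3) = R(\hat z,\phi_3)$ lies in $\mathcal{R}_z$; the second factor $R(\theta_4,0,\phi_4)$ has axis in the $x$--$z$ plane with $\theta_4 \in (0,\pi/2)$. The one point needing care is whether $R(\theta_4,0,\phi_4) \in \mathcal{S}_\xi$, i.e. whether $\theta_4 \le \Theta$: for $\Theta \ge \pi/2$ this is automatic since $\theta_4 < \pi/2$, but for $\Theta < \pi/2$ it is not obvious and is exactly where the work goes. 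I expect this to follow from a monotonicity/geometry estimate: composing two rotations whose axes both lie in the cone $[0,\Theta]$, after stripping off a pure $z$-rotation on the left, leaves a rotation whose remaining axis still lies in $[0,\Theta]$ — morally because the "spread" of the product axis is controlled by the larger of $\theta_1,\theta_2$. I would make this precise by examining the explicit formula for $\tilde\theta_4$ coming from Eq.~\eqref{tca}–\eqref{tcb} applied to the product $U(\theta_1,0,\phi_1)U(\theta_2,0,\phi_2)$ and bounding $\cot\theta_4$ from below by $\cot\Theta$ using $\theta_1,\theta_2 \le \Theta$.

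For case (ii) I would use the reflection symmetry of the problem about the bisector of the $\hat z$ and $\hat m$ axes: conjugating by the rotation $R(\hat y, -\Theta)$ (equivalently, the isometry of the $x$--$z$ plane swapping $\hat z \leftrightarrow \hat m$) maps $\mathcal{S}_\xi$ to itself, maps $\mathcal{R}_z \leftrightarrow \mathcal{R}_m$, and reverses the ordering of the $\theta$-coordinates. Applying case (i) in the reflected frame to the reflected pair and conjugating back yields $U_1 U_2 = R^m R^\xi$ with $R^m \in \mathcal{R}_m$ and $R^\xi \in \mathcal{S}_\xi$, completing the proof. The main obstacle is the bound $\theta_4 \le \Theta$ in case (i) when $\Theta < \pi/2$; everything else is bookkeeping with the parametrization \eqref{SU2} and the already-established Lemmas \ref{le:two} and \ref{le:unq}.
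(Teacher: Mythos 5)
Your overall architecture matches the paper's: case (i) is obtained by specializing the family of decompositions from Lemma~\ref{le:unq} to the endpoint $\theta_3=0$, and case (ii) follows by conjugating with the reflection $\theta\mapsto\Theta-\theta$ that swaps $\hat z\leftrightarrow\hat m$ (the paper implements this as conjugation by $R(\Theta/2,0,\pi)$). You are also right that the crux is the single inequality $\theta_4\leqslant\Theta$, which is automatic only when $\Theta\geqslant\pi/2$. However, that crux is exactly where your argument stops: ``I expect this to follow from a monotonicity/geometry estimate\dots I would make this precise by examining the explicit formula for $\cot\theta_4$'' is a plan, not a proof, and it is the entire content of the lemma. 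The paper does \emph{not} prove $\cot\theta_4\geqslant\cot\Theta$ by direct algebra at $\theta_3=0$. Instead it exploits the whole one-parameter family from Lemma~\ref{le:unq}: writing $R(x,0,\phi_3(x))R(y(x),0,\phi_4(x))=U_1U_2$ for $x\in[0,\theta_1]$, it derives the identity $\sin\frac{\phi_3(x_0)}{2}\sin\frac{\phi_3(x)}{2}\sin(x_0-x)=\sin\frac{\phi_4(x_0)}{2}\sin\frac{\phi_4(x)}{2}\sin(y(x_0)-y(x))$ by equating matrix elements, concludes that $y$ is continuous and monotonically increasing, and then reads off $\theta_4=y(0)\leqslant y(\theta_1)=\theta_2\leqslant\Theta$. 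This ``transport along the family'' step is the key idea of the lemma and is absent from your proposal.

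If you want to salvage your direct route, be warned that it is delicate: the quantity $\cot\theta_-$ from Eq.~\eqref{tca} depends on branch choices for $\tilde\phi\in[0,4\pi)$ and on the sign of the $y$-component of the product axis, and a naive substitution of the product $U(\theta_1,0,\phi_1)U(\theta_2,0,\phi_2)$ into that formula readily produces sign errors (e.g.\ for $U_1=R(\hat z,\pi)$, $U_2=R(\hat m,\pi)$ the product axis is $\pm\hat y$ and one must take $\tilde\phi\in(2\pi,4\pi)$ to stay in the parametrization, which flips the apparent sign of $\cot\theta_-$). The correct bound is the sharper $\theta_4\leqslant\max\{\theta_1,\theta_2\}$, consistent with your intuition, but you have not established it; until you do, neither case (i) nor the reflected case (ii) is proved.
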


~\\	
	\begin{proof}~\\~\\
	\textbf{Case I $\theta_1 < \theta_2$ :}
	
	According to Lemma \ref{le:unq}, we can define the following implicit functions $\phi_{3}(x),\phi_{4}(x)$, $y(x)$ that satisfy 
\begin{equation}
R\left(x,0,\phi_3(x)\right)R\left(y(x),0,\phi_4(x)\right)=U_1U_2=\text{Const},
 \label{Const}
\end{equation}	
where $x\in[0,\theta_1]$, $\phi_{3,4}(x)\in(0,2\pi)$ and $y(x)\in(0,\pi)$. From Lemma \ref{le:unq}, the above implicit functions have the following properties:~\\ ~\\
(1)  $y(x)$, $\phi_{3,4}(x)$ are single-value functions (uniqueness); ~\\
(2) $y(x)-x\neq0$;~\\
(3) $y(\theta_1)=\theta_2$.~\\
	To prove case I of Lemma \ref{le:bound1}, we only need to show that $y(\Theta)\in[0,\Theta]$.
        We first evaluate the continuity and monotonicity of $y(x)$. For an independent value $x_0\in[0,\theta_1]$, we always have  
		 \begin{widetext}

	\begin{equation}
	R(x_0,0,\phi_3(x_0))R(y_0,0,\phi_4(x_0))=R(x,0,\phi_3(x))R(y(x),0,\phi_4(x)),
        \end{equation}
which can be rewritten as:
	\begin{eqnarray}  
	R(x,0,-\phi_3(x))  R(x_0,0,h(x_0))   =R(y,0,\phi_4(x))   R(y(x_0),0,-\phi_4(x_0)) 
	=\left[\begin{array}{cc}
	c_{11} & c_{12}\\
	c_{21} &c_{22}
	\end{array}\right].
\end{eqnarray}
	We note that 
	\begin{equation}
	\text{Re}[c_{21}]=\sin\frac{\phi_3(x_0)}{2}\sin\frac{\phi_3(x)}{2}\sin( x_0-x)=\sin\frac{\phi_4(x_0)}{2} \sin\frac{\phi_4(x)}{2}   \sin(y(x_0)-y(x)). \label{hh}
	\end{equation}
	               	 \end{widetext}

	Since  $\phi_{3,4}\in (0,2\pi)$, when $x\neq x_0$, we have
	\begin{equation}
         0<\frac{\sin (y-y_0)}{\sin(x-x_0)}<\infty. \label{eq:cm1}
	\end{equation}
        	And since  $y\in(0,2\pi)$, $y_0\in(0,2\pi)$, we have 
	\begin{equation}
        0< \lim_{x\rightarrow x_0}\frac{y-y_0}{x-x_0}<\infty.\label{eq:cm2}
	\end{equation}	
	Therefore, $y(x)$ and $y(x)-x$ are continuous. Since $y(x)-x\neq 0$ [property (2)], and $y(\theta_1)-\theta_1<0$ [property (3)], according to the intermediate value theorem of continuous function, we have $y(0)-0<0$. 
	
	Moreover, from Eq.~\eqref{eq:cm2} , we know that $y(x_1)\leqslant y(x_2)$ if $x_1\leqslant x_2$. Since $y(\theta_1)<\Theta$, and $0\leqslant\theta_1$, we have $y(0)\leqslant y(\theta_1)<\Theta$. Therefore, $y(\Theta)\in[0,\Theta]$, and (i) of Lemma \ref{le:bound1} hold true.
   
         \textbf{Case II}: $\theta_2 > \theta_1$
         
        In this case, we first let $\tilde{U}_{1,2}=R(\Theta/2,0,\pi)U_{1,2}R(\Theta/2,0,-\pi)$, one can verify that $\tilde{U}_{1,2}=U(\Theta-\theta_{1,2},0,\phi_{1,2})\in\mathcal{S}_\xi$. Since $\Theta-\theta_1<\Theta-\theta_2$, according to case I of Lemma \ref{le:bound1}, there exist certain $\tilde{R}_3=R(\Theta,0,\phi_3)\in\mathcal{S}_m$, $\tilde{R}_4=R(\tilde{\theta}_4,0,\phi_4)\in\mathcal{S}_\xi$, such that 
        \begin{equation}
        \tilde{U}_1\tilde{U}_2=\tilde{R}_3\tilde{R}_4,
        \end{equation}
        which is equivalent to 
        \begin{align}
        U_1U_2&=R(0,0,\phi_3)R(\Theta-\tilde{\theta}_4,0,\phi_4). 
        \end{align}        
        Obviously, $R(\Theta,0,\phi_3)\in\mathcal{S}_m$, $R(\Theta-\tilde{\theta}_4,0,\phi_4)\in\mathcal{S}_\xi$. So (ii) of Lemma \ref{le:bound1} also holds true.

\end{proof}~\\

In the following, we generalize the above result to a larger sets of rotations [those with $\phi\in[0,4\pi)$].
   \begin{lemma}\label{le:bound2}
        
 Given $U_{1}^{\xi}=U(\theta_{1},0,\phi_{1})\in\mathcal{G}_{\xi}, U_{2}^{\xi}=U(\theta_{2},0,\phi_{2})\in\mathcal{G}_{\xi}$, there exist certain rotations $R^z_{1,2}\in\mathcal{G}_z$, $R^{\xi}_{1,2}\in\mathcal{G}_\xi$, and $R^m_{1,2}\in\mathcal{G}_m$, such that either
 \begin{equation}
     U_1^{\xi}U_2^{\xi}=R^{m}_1R^{\xi}_2=R^{\xi}_1R^z_2,
     \end{equation}
     or
             \begin{equation}
     U_1^{\xi}U_2^{\xi}=R^z_1R^{\xi}_2=R^{\xi}_1R^m_2.
     \end{equation}
        \end{lemma}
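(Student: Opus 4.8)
The plan is to bootstrap from Lemma~\ref{le:bound1} in two moves: first strip off the degenerate parameter values, reducing the question to a product of two genuine $\mathcal{S}_\xi$ rotations with distinct polar angles, and then obtain the ``other-sided'' factorization essentially for free by inverting the product.

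First I would dispose of the cases in which $U_1^\xi U_2^\xi$ is itself a single element of $\mathcal{G}_\xi$, since then the second alternative is immediate: for $W\in\mathcal{G}_\xi$ one has $W=R(\hat z,0)\,W$ with $R(\hat z,0)=\mathbbm{1}\in\mathcal{G}_z$, and $W=W\,R(\hat m,0)$ with $R(\hat m,0)=\mathbbm{1}\in\mathcal{G}_m$. This covers (a) $\phi_1\in\{0,2\pi\}$ or $\phi_2\in\{0,2\pi\}$, where one factor is $\pm\mathbbm{1}$ and the product is $\pm U_j^\xi$, which remains in $\mathcal{G}_\xi$ after absorbing the sign into its rotation angle; and (b) $\theta_1=\theta_2$, where $U_1^\xi$ and $U_2^\xi$ share an axis so that $U_1^\xi U_2^\xi=R(\theta_1,0,\phi_1+\phi_2)\in\mathcal{G}_\xi$.

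It then remains to treat $\phi_1,\phi_2\in[0,4\pi)\setminus\{0,2\pi\}$ with $\theta_1\ne\theta_2$. I would write $U_i^\xi=\varepsilon_i V_i$ with $\varepsilon_i=\pm1$ and $V_i=R(\theta_i,0,\phi_i')\in\mathcal{S}_\xi$, $\phi_i'\in(0,2\pi)$; because $\mathcal{G}_z,\mathcal{G}_m,\mathcal{G}_\xi$ are each closed under $\phi\mapsto\phi+2\pi$ and under inversion, an overall sign can always be absorbed into a boundary rotation, so it suffices to factor $V_1V_2$. Applying Lemma~\ref{le:bound1} directly gives the left-factorization $V_1V_2=R^zR^\xi$ if $\theta_1<\theta_2$ and $V_1V_2=R^mR^\xi$ if $\theta_1>\theta_2$. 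For the complementary right-factorization I pass to the inverse: from $V_i^{-1}=-R(\theta_i,0,2\pi-\phi_i')$ one gets $(V_1V_2)^{-1}=V_2^{-1}V_1^{-1}=W_2W_1$ with $W_i:=R(\theta_i,0,2\pi-\phi_i')\in\mathcal{S}_\xi$, in which the two polar angles now appear in the reversed order $(\theta_2,\theta_1)$. Lemma~\ref{le:bound1} applied to $W_2W_1$ then gives $W_2W_1=R^mR^\xi$ when $\theta_1<\theta_2$, i.e.\ $V_1V_2=(R^\xi)^{-1}(R^m)^{-1}=R^\xi R^m$, and $W_2W_1=R^zR^\xi$ when $\theta_1>\theta_2$, i.e.\ $V_1V_2=R^\xi R^z$. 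Combining the left- and right-factorizations and restoring the sign $\varepsilon_1\varepsilon_2$ yields the second alternative of the lemma when $\theta_1<\theta_2$ and the first when $\theta_1>\theta_2$.

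The main obstacle is already absorbed into Lemma~\ref{le:bound1}: all the geometric content (the monotonicity and intermediate-value argument that keeps the leftover rotation axis inside the allowed cone) lives there, and what is left here is bookkeeping --- tracking the $\pm\mathbbm{1}$ ambiguity in reducing $\phi\in[0,4\pi)$ to $\phi\in(0,2\pi)$, and noting that inverting the product is exactly what converts an ``extract a boundary rotation on the left'' statement into an ``extract one on the right'' statement, the accompanying exchange of $\theta_1$ and $\theta_2$ being precisely why both clauses (i) and (ii) of Lemma~\ref{le:bound1} must be used.
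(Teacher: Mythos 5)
Your proposal is correct and follows essentially the same route as the paper: the same disposal of the degenerate cases ($\phi_i\in\{0,2\pi\}$ and $\theta_1=\theta_2$), the same reduction modulo signs to products in $\mathcal{S}_\xi$, and the same use of Lemma~\ref{le:bound1} for one side together with the inverse (the paper writes it as $\pm[(-U_{2'})^\dag(-U_{1'})^\dag]^\dag$, which is your $W_2W_1$ trick) for the other side. No gaps.
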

        
        \begin{proof}

        We classify the domain of $\theta_{1,2}$ and $\phi_{1,2}$ into four cases:

	\textbf{Case I} $\phi_1\in\{0, 2\pi\}$ or $\phi_2\in\{0, 2\pi\}$
	
         We can take $R_{1,2}^{z}=R_{1,2}^{m}=R_{1,2}^{\xi}=R(0,0,0)$, or $R_{1,2}^{z}=R_{1,2}^{m}=R_{1,2}^{\xi}=R(0,0,2\pi)$. 
	
	\textbf{Case II} $\phi_{1,2}\in(0,2\pi)\cup(2\pi,4\pi)$, and $\theta_1=\theta_2=\theta$.
	
	In this case, we have $U_1U_2=R(\theta,\phi_1+\phi_2)$, so
	\begin{align}
	U_1U_2=R(\hat{z},0) R(\theta,0,\phi_1+\phi_2)=R(\theta,0,\phi_1+\phi_2)R(\hat{m},0),
	\end{align}
	and
	\begin{align}
	U_1U_2=R(\hat{z},0) R(\theta,0,\phi)=R(\theta,0,\phi)R(\hat{m},0),
	\end{align}
	where $\phi=(\phi_1+\phi_2)~\rm{mod}~4\pi $. Since $R(\theta,0,\phi)\in\mathcal{G}_{\xi}$, $R(\hat{z},\phi)\in\mathcal{G}_{z}$, $R(\hat{m},\phi)\in\mathcal{G}_{m}$, Lemma \ref{le:bound2} hold true in this case.
	
	\textbf{Case III} $\phi_{1,2}\in(0,2\pi)\cup(2\pi,4\pi)$, $\theta_1\neq\theta_2$ and $0\leqslant\theta_{2}<\theta_{1}\leqslant\Theta$.
       
       Let $\phi_{1',2'}=\phi_{1,2}~\rm{mod}~2\pi$, and $U_{1',2'}=R(\theta_{1,2},0,\phi_{1',2'})$, we have 
       \begin{subequations}\label{rhs}
       \begin{align} 
       U_1U_2&=\pm U_{1'}U_{2'},\label{rhs1}\\
       U_1U_2&=\pm \left[ (-U_{2'})^\dag (-U_{1'})^\dag \right]^\dag.  \label{rhs2}
       \end{align}
       \end{subequations}
       Obviously, $U_{1',2'}\in\mathcal{S}_{\xi}$ and $(-U_{1',2'})^\dag\in\mathcal{S}_{\xi}$, and since $\theta_2<\theta_1$, we can apply 
        Lemma \ref{le:bound1}(i) to the r.h.s. of Eq.~\eqref{rhs1}. In other words, there exist certain rotations 
       \begin{subequations}
       \begin{align}
       R^{m}_{1'}&\in\mathcal{G}_m, \\
       R^{\xi}_{2'}&\in\mathcal{S}_\xi\subset\mathcal{G}_\xi,
       \end{align}\label{in1}
       \end{subequations}
       such that 
       \begin{equation}
       U_{1'}U_{2'}=R^m_{1'}R^{\xi}_{2'}. \label{urp1}
       \end{equation}
       Similarly, we can apply 
        Lemma \ref{le:bound1}(ii) to the r.h.s. of Eq.~\eqref{rhs2}. So there exist certain rotations 
       \begin{subequations}
       \begin{align}
       R^{\xi}_{1'}&\in\mathcal{S}_\xi\subset\mathcal{G}_\xi,\\
       R^{z}_{2'}&\in\mathcal{G}_z,  
       \end{align}\label{in2}
       \end{subequations}
       such that
       \begin{equation}
       (-U_{2'})^\dag(-U_{1'})^\dag=R^z_{2'}R^{\xi}_{1'},
       \end{equation}
       which also leads to
       \begin{equation}
       \left[(-U_{2'})^\dag(-U_{1'})^\dag\right]^\dag=(R^{\xi}_{1'})^{\dag} (R^z_{2'})^{\dag}. \label{urp2}
       \end{equation}       
       Combining Eq.~\eqref{rhs},~\eqref{urp1},~\eqref{urp2}, we have
       \begin{equation}
       U_1U_2=\pm R^m_{1'}R^{\xi}_{2'}=\pm (R^{\xi}_{1'})^{\dag} (R^z_{2'})^{\dag}.\label{eq12}
       \end{equation}
       Since 
       \begin{subequations}
       \begin{align}
       \pm R^m_{1'}\in \mathcal{G}_m,\\
       \pm R^{\xi}_{2'}\in \mathcal{G}_\xi,\\
       \pm (R^{\xi}_{1'})^{\dag}\in \mathcal{G}_\xi,\\
       \pm (R^z_{2'})^{\dag}\in\mathcal{G}_z,
       \end{align}
       \end{subequations}
       Lemma \ref{le:bound2} hold true in this case.

        \textbf{Case IV} $\phi_{1,2}\in(0,2\pi)\cup(2\pi,4\pi)$, $\theta_1\neq\theta_2$ and $0\leqslant\theta_{1}<\theta_{2}\leqslant\Theta$.

        The prove of this case follows the same approach in case III.
      \end{proof}  ~\\
      
      Then, we have the following corollary:
     
      $\\$
      \begin{corollary} \label{co:zm}

      \begin{equation}
      \mathcal{G}_{\xi}\mathcal{G}_\xi=\mathcal{G}_b\mathcal{G}_\xi=\mathcal{G}_\xi\mathcal{G}_b. 
      \end{equation}
      \end{corollary}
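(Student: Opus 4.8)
\textbf{Proof proposal for Corollary~\ref{co:zm}.}

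The plan is to establish the three-way equality of sets by a sequence of inclusions, using Lemma~\ref{le:bound2} together with the trivial inclusions coming from $\mathcal{G}_z,\mathcal{G}_m\subset\mathcal{G}_\xi$ (since both $\hat{z}$ and $\hat{m}$ axes lie at the endpoints $\theta=0$ and $\theta=\Theta$ of the allowed range $[0,\Theta]$, we have $\mathcal{G}_b=\mathcal{G}_z\cup\mathcal{G}_m\subset\mathcal{G}_\xi$). From this one immediately gets $\mathcal{G}_b\mathcal{G}_\xi\subset\mathcal{G}_\xi\mathcal{G}_\xi$ and $\mathcal{G}_\xi\mathcal{G}_b\subset\mathcal{G}_\xi\mathcal{G}_\xi$, which are the "easy" directions. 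The content of the corollary is the reverse inclusion $\mathcal{G}_\xi\mathcal{G}_\xi\subset\mathcal{G}_b\mathcal{G}_\xi$ and $\mathcal{G}_\xi\mathcal{G}_\xi\subset\mathcal{G}_\xi\mathcal{G}_b$.

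For the reverse inclusions, I would take an arbitrary element $U\in\mathcal{G}_\xi\mathcal{G}_\xi$, so $U=U_1^\xi U_2^\xi$ with $U_{1,2}^\xi\in\mathcal{G}_\xi$, and apply Lemma~\ref{le:bound2} directly. That lemma gives rotations $R^z_{1,2}\in\mathcal{G}_z$, $R^\xi_{1,2}\in\mathcal{G}_\xi$, $R^m_{1,2}\in\mathcal{G}_m$ with either
\begin{equation}
U_1^\xi U_2^\xi=R^m_1R^\xi_2=R^\xi_1R^z_2,
\end{equation}
or
\begin{equation}
U_1^\xi U_2^\xi=R^z_1R^\xi_2=R^\xi_1R^m_2.
\end{equation}
In the first alternative, $U=R^m_1R^\xi_2\in\mathcal{G}_m\mathcal{G}_\xi\subset\mathcal{G}_b\mathcal{G}_\xi$ and $U=R^\xi_1R^z_2\in\mathcal{G}_\xi\mathcal{G}_z\subset\mathcal{G}_\xi\mathcal{G}_b$; in the second alternative, $U=R^z_1R^\xi_2\in\mathcal{G}_z\mathcal{G}_\xi\subset\mathcal{G}_b\mathcal{G}_\xi$ and $U=R^\xi_1R^m_2\in\mathcal{G}_\xi\mathcal{G}_m\subset\mathcal{G}_\xi\mathcal{G}_b$. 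Either way, $U\in\mathcal{G}_b\mathcal{G}_\xi$ and $U\in\mathcal{G}_\xi\mathcal{G}_b$. Combining with the easy inclusions yields $\mathcal{G}_\xi\mathcal{G}_\xi=\mathcal{G}_b\mathcal{G}_\xi$ and $\mathcal{G}_\xi\mathcal{G}_\xi=\mathcal{G}_\xi\mathcal{G}_b$, hence all three sets coincide.

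There is essentially no obstacle here: the corollary is a direct repackaging of Lemma~\ref{le:bound2}, so the only thing to be careful about is bookkeeping — verifying that $\mathcal{G}_b\subset\mathcal{G}_\xi$ (and hence the easy direction) and making sure both the left-factored and right-factored forms supplied by the lemma are used to cover $\mathcal{G}_b\mathcal{G}_\xi$ and $\mathcal{G}_\xi\mathcal{G}_b$ simultaneously. If one wanted to be fully explicit, one could also note that the symmetry between the two alternatives in Lemma~\ref{le:bound2} is exactly what makes both equalities fall out at once, rather than needing a separate argument for each.
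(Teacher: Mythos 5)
Your proposal is correct and follows essentially the same route as the paper: the paper likewise derives $\mathcal{G}_{\xi}\mathcal{G}_{\xi}\subset \left(\mathcal{G}_{z}\mathcal{G}_{\xi}\cap\mathcal{G}_{\xi}\mathcal{G}_{m} \right) \cup \left(\mathcal{G}_{m}\mathcal{G}_{\xi}\cap\mathcal{G}_{\xi}\mathcal{G}_{z} \right)$ directly from Lemma~\ref{le:bound2} and then closes the loop using $\mathcal{G}_{b}\subset\mathcal{G}_{\xi}$, exactly as you do. Your write-up is simply a more explicit, element-by-element version of the paper's one-line argument, with the same key lemma and the same easy/hard split.
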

      
      \begin{proof}
      
      From Lemma \ref{le:bound2}, we know that $ \mathcal{G}_{\xi}\mathcal{G}_{\xi}\subset \left(\mathcal{G}_{z}\mathcal{G}_{\xi}\cap\mathcal{G}_{\xi}\mathcal{G}_{m} \right) \cup \left(\mathcal{G}_{m}\mathcal{G}_{\xi}\cap\mathcal{G}_{\xi}\mathcal{G}_{z} \right)$. Since $\mathcal{G}_{b}\subset\mathcal{G}_{\xi}$ and $\mathcal{G}_{b}=\mathcal{G}_{z}\cup\mathcal{G}_{m}$, we have $\mathcal{G}_{\xi}\mathcal{G}_\xi=\mathcal{G}_b\mathcal{G}_\xi=\mathcal{G}_\xi\mathcal{G}_b$.
      \end{proof}

        \subsection{$p\geqslant3$ piece decomposition} \label{sec:reduce}
        
       In the following, we will provide the proof for (i) of Theorem 1, which is equivalent to the following
\begin{theorem}\label{th:s} 

For $p\in\mathbb{N}^{*}$, $p\geqslant 3$, and $\Theta\in(0,\frac{\pi}{2})$
	\begin{equation}
	\mathcal{G}_{\xi}^p=\mathcal{G}_{b}^p. \label{theta}
	\end{equation}
\end{theorem}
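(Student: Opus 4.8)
\textbf{Proof plan for Theorem~\ref{th:s}.}
The inclusion $\mathcal{G}_b^p\subset\mathcal{G}_\xi^p$ is trivial since $\mathcal{G}_b\subset\mathcal{G}_\xi$, so the whole content is the reverse inclusion $\mathcal{G}_\xi^p\subset\mathcal{G}_b^p$ for $p\ge 3$. My plan is to prove this by induction on $p$, using Corollary~\ref{co:zm} as the engine that lets me "push" interior rotations toward the boundary one step at a time. The key structural fact I will exploit is that Corollary~\ref{co:zm} gives $\mathcal{G}_\xi\mathcal{G}_\xi=\mathcal{G}_b\mathcal{G}_\xi=\mathcal{G}_\xi\mathcal{G}_b$, i.e.\ any product of two interior rotations can be rewritten so that one factor sits on whichever boundary ($\mathcal{G}_z$ or $\mathcal{G}_m$) we prefer; moreover Lemma~\ref{le:bound2} tells us the boundary choice for the left factor and the right factor can be made consistently (the $R^z\cdots R^\xi=R^\xi\cdots R^m$ dichotomy), which is what makes the pushing process compatible when iterated.

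\textbf{Base case $p=3$.} Take $U=R_1R_2R_3\in\mathcal{G}_\xi^3$. Apply Corollary~\ref{co:zm} to the pair $R_1R_2$ to write $R_1R_2=R_1'R_2''$ with $R_1'\in\mathcal{G}_b$ and $R_2''\in\mathcal{G}_\xi$; then $U=R_1'(R_2''R_3)$ with $R_2''R_3\in\mathcal{G}_\xi\mathcal{G}_\xi$, and apply Corollary~\ref{co:zm} again to $R_2''R_3$, this time in the form $\mathcal{G}_\xi\mathcal{G}_\xi=\mathcal{G}_\xi\mathcal{G}_b$, to get $R_2''R_3=R_2'''R_3'$ with $R_2'''\in\mathcal{G}_\xi$ and $R_3'\in\mathcal{G}_b$. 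Then $U=R_1'R_2'''R_3'$ with the two outer factors already on the boundary and the middle factor in $\mathcal{G}_\xi$. To finish I need the middle factor to be on the boundary too; here I would use Lemma~\ref{le:bound2} applied to $R_1R_2$, which forces the boundary \emph{type} of $R_1'$ and simultaneously produces a compatible rewriting $R_1R_2=R^\xi_1R^z_2$ (resp.\ $R^\xi_1R^m_2$), so that after also rewriting $R_2R_3$ one obtains $R_1'R_2'''R_3'$ where $R_2'''$ is sandwiched between two boundary rotations of opposite type and can itself be absorbed onto the boundary by a final application of Corollary~\ref{co:zm}/Lemma~\ref{le:bound2}. (The bookkeeping of which of the two alternatives in Lemma~\ref{le:bound2} occurs is the delicate point and must be tracked carefully; the condition $\Theta<\pi/2$ enters here because it guarantees $\hat z\neq\hat m$ and that the intermediate axes produced stay in $[0,\Theta]$.)

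\textbf{Inductive step.} Assume $\mathcal{G}_\xi^{p-1}=\mathcal{G}_b^{p-1}$ for some $p-1\ge 3$. Given $U=R_1\cdots R_p\in\mathcal{G}_\xi^p$, apply Corollary~\ref{co:zm} to $R_1R_2$ to replace it by $R_1'R_2'$ with $R_1'\in\mathcal{G}_b$ and $R_2'\in\mathcal{G}_\xi$; then $U=R_1'\,(R_2'R_3\cdots R_p)$ where the bracketed part lies in $\mathcal{G}_\xi^{p-1}=\mathcal{G}_b^{p-1}$ by the inductive hypothesis, hence equals some $S_2\cdots S_p$ with all $S_i\in\mathcal{G}_b$. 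This gives $U=R_1'S_2\cdots S_p$ with \emph{all} factors on the boundary, i.e.\ $U\in\mathcal{G}_b^p$, except for one subtlety: we must ensure the decomposition still has $p$ genuine pieces, i.e.\ that $R_1'$ and $S_2$ are not on the same axis (if they were, they would merge and give only $p-1$ pieces, which is fine — it means $U\in\mathcal{G}_b^{p-1}\subset\mathcal{G}_b^p$ after padding with a trivial rotation, but one should state this carefully because $\mathcal{G}_b$ as defined allows $\phi=0$, so padding is legitimate). Collapsing of axes is therefore harmless, and the induction closes.

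\textbf{Main obstacle.} The routine part is invoking Corollary~\ref{co:zm} repeatedly; the genuine difficulty is the \emph{consistency of the boundary-type choices}. Corollary~\ref{co:zm} alone only says a two-factor product can be rewritten with one factor on $\mathcal{G}_b=\mathcal{G}_z\cup\mathcal{G}_m$, but when we chain rewritings the axis assigned to a shared factor on one step must be compatible with the type forced on the adjacent step; this is exactly why Lemma~\ref{le:bound2}'s precise statement (two alternatives, $R^m R^\xi=R^\xi R^z$ or $R^z R^\xi=R^\xi R^m$) is needed rather than just Corollary~\ref{co:zm}. I expect the bulk of the real proof to be the $p=3$ base case, handled by a finite case analysis over which alternative of Lemma~\ref{le:bound2} applies to each adjacent pair, with the inductive step being comparatively short once the base case is secured.
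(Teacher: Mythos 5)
Your trivial inclusion, your use of Corollary~\ref{co:zm} to push factors to the boundary one pair at a time, and your inductive step are all in line with the paper's argument (the paper iterates $\mathcal{G}_{\xi}^p\subset\mathcal{G}_b\mathcal{G}_{\xi}^{p-1}\subset\cdots\subset\mathcal{G}_b^{p-2}\mathcal{G}_{\xi}\mathcal{G}_{\xi}$ rather than inducting, but that is cosmetic). The genuine gap is in how you propose to finish: after reaching $U=R_1'R_2'''R_3'$ with the outer factors in $\mathcal{G}_b$ and the middle factor in $\mathcal{G}_{\xi}$, you claim the middle factor "can itself be absorbed onto the boundary by a final application of Corollary~\ref{co:zm}/Lemma~\ref{le:bound2}." It cannot. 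Both of those results, applied to any adjacent pair, always return a two-factor product in which one factor still lies in $\mathcal{G}_{\xi}$ (and generically strictly in the interior); they relocate the interior factor but never eliminate it, so repeated application just cycles. Indeed the obstruction is structural: by the $p=2$ criterion (Theorem~\ref{th:two}), a generic element of $\mathcal{G}_{\xi}\mathcal{G}_{\xi}$ is \emph{not} in $\mathcal{G}_b\mathcal{G}_b$, so no amount of pairwise rewriting of a three-factor product can place all three factors on the boundary. No case analysis over the two alternatives of Lemma~\ref{le:bound2} will close this.

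The missing ingredient is quantitative, not combinatorial: for $\Theta\in(0,\pi/2)$ every $R(\theta,0,\phi)\in\mathcal{G}_{\xi}$ satisfies $|\delta_1(\theta,\phi)|\leqslant\theta\leqslant\Theta$ and $|\delta_2(\theta,0,\phi,\Theta)|\leqslant\Theta-\theta\leqslant\Theta$, so by Theorem~\ref{th:odd} (equivalently Lemma~\ref{le:zmz}) it admits a \emph{three}-piece boundary decomposition in both the $z$-$m$-$z$ and the $m$-$z$-$m$ forms, i.e.\ $\mathcal{G}_{\xi}\subset\mathcal{G}_z\mathcal{G}_m\mathcal{G}_z\cap\mathcal{G}_m\mathcal{G}_z\mathcal{G}_m$, and hence so does $\mathcal{G}_{\xi}\mathcal{G}_{\xi}$. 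This is what terminates the pushing process: the trailing $\mathcal{G}_{\xi}\mathcal{G}_{\xi}$ block in $\mathcal{G}_b^{p-2}\mathcal{G}_{\xi}\mathcal{G}_{\xi}$ is replaced by three boundary pieces whose leading axis is chosen to coincide with the adjacent boundary factor, so the two merge and the count stays at $p$. This is also where $\Theta<\pi/2$ genuinely enters (via the $\delta_{1,2}\leqslant\Theta$ bounds), not through $\hat z\neq\hat m$ or the intermediate axes staying in $[0,\Theta]$ as you suggest. Your plan as written cannot be completed without importing this three-piece fact, at which point the induction becomes unnecessary.
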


	\begin{proof}

	 From Corollary \ref{co:zm}, we know that  
	 \begin{equation}
	 \mathcal{G}_{\xi}\mathcal{G}_{\xi} \subset \mathcal{G}_{b}\mathcal{G}_{\xi}\cap \mathcal{G}_{\xi}\mathcal{G}_{b}. \label{xi}
	 \end{equation}
	 
	 One can verify from Theorem \ref{th:odd} (refer to section \ref{sec:fix}) that when $\Theta\in(0,\frac{\pi}{2})$, all rotations in $\mathcal{G}_{\xi}$ can be decomposed into three pieces both in the form of $z$-$m$-$z$ and $m$-$z$-$m$. In orther words, we have
	\begin{equation}
	\mathcal{G}_{\xi}\subset \mathcal{G}_{z}\mathcal{G}_{m}\mathcal{G}_{z}\cap\mathcal{G}_{m}\mathcal{G}_{z}\mathcal{G}_{m}, \label{set_zmz}
	\end{equation}
	which also gives 
	\begin{subequations}
	\begin{align}
	\mathcal{G}_b\mathcal{G}_{\xi}\subset \mathcal{G}_{z}\mathcal{G}_{m}\mathcal{G}_{z}\cap\mathcal{G}_{m}\mathcal{G}_{z}\mathcal{G}_{m}, \\
	\mathcal{G}_{\xi}\mathcal{G}_b\subset \mathcal{G}_{z}\mathcal{G}_{m}\mathcal{G}_{z}\cap\mathcal{G}_{m}\mathcal{G}_{z}\mathcal{G}_{m}.
	\end{align}\label{bxizmz}	
	\end{subequations}
	Combining Eq.~\eqref{xi} and Eq.~\eqref{bxizmz}, we have
	\begin{equation}
	\mathcal{G}_{\xi}\mathcal{G}_{\xi}\subset \mathcal{G}_{z}\mathcal{G}_{m}\mathcal{G}_{z}\cap\mathcal{G}_{m}\mathcal{G}_{z}\mathcal{G}_{m}. \label{set_zmz_2}
	\end{equation}
        According to Eq.~\eqref{xi} we have
	\begin{equation}
	(\mathcal{G}_{\xi})^p\subset \mathcal{G}_{b}(\mathcal{G}_{\xi})^{p-1}\subset \mathcal{G}_{b}\mathcal{G}_{b}(\mathcal{G}_{\xi})^{p-2}\subset \cdots \subset \left(\mathcal{G}_{b} \right)^{p-3}\mathcal{G}_{b}\mathcal{G}_{\xi} \mathcal{G}_{\xi} \label{S}.
	\end{equation}
	Combining Eq.~\eqref{set_zmz_2},~\eqref{S}, and note that $\mathcal{G}_{b}=\mathcal{G}_{z}\cup\mathcal{G}_{m}$, we have 
	\begin{equation}
	\mathcal{G}_{\xi}^p\subset\mathcal{G}_{b}^p \label{S_2},
	\end{equation}
        and since 
	\begin{equation}
	\mathcal{G}_{b}^p\subset\mathcal{G}_{\xi}^p,
	\end{equation}        
	we finally get 
	\begin{equation}
	\mathcal{G}_{\xi}^p=\mathcal{G}_{b}^p.
	\end{equation}
	\end{proof}

        \subsection{$p=2$ decomposition}
        \begin{theorem} \label{th:two} ~\\~\\
        Given $U(\theta,\psi,\phi)\in\mathcal{A}$, 
        \begin{equation}
        U(\theta,\psi,\phi)\in\mathcal{G}_{\xi}\mathcal{G}_{\xi} \label{two_piece}
        \end{equation}
         if and only if one of the following condition is satisfied:\\
        (i) $\phi\in\{0,2\pi\}$,\\
        (ii) $\theta=0$,\\
        (iii)
\begin{equation}
\frac{\pm\sin\psi\cos\frac{\phi}{2} + \cos\psi\sin\frac{\phi}{2}\cos\theta }{\sin\frac{\phi}{2}\sin\theta}\geqslant \cot\Theta \label{two_cr}
\end{equation}
is satisfied for either sign of `$\pm$', or\\
(iv) $\Theta=\pi$.
        \end{theorem}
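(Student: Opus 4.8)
The plan is to characterize membership in $\mathcal{G}_\xi\mathcal{G}_\xi$ by reducing to Lemma~\ref{le:two}, which already gives a canonical two-piece decomposition of \emph{any} $U\in\mathcal{A}$ into a $z$-rotation times a rotation with axis in the full $x$-$z$ plane (or vice versa). The only obstruction to $U\in\mathcal{G}_\xi\mathcal{G}_\xi$ is then whether both component axes can be pushed into the wedge $\theta\in[0,\Theta]$. First I would dispose of the degenerate cases: if $\phi\in\{0,2\pi\}$ (Case~I of Lemma~\ref{le:two}) then $U=\pm\mathbbm{1}$ up to a $z$-phase, so $U=R(\hat z,\phi)R(\hat z,0)\in\mathcal{G}_z\mathcal{G}_z\subset\mathcal{G}_\xi\mathcal{G}_\xi$; if $\theta=0$ then $U\in\mathcal{G}_z$ and again trivially factors; and if $\Theta=\pi$ the wedge is the whole $x$-$z$ plane, so Lemma~\ref{le:two} directly gives the decomposition (this is the content of case (iv), matching Ref.~\cite{Shim.13}). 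This establishes sufficiency of (i),(ii),(iv), and the core of the argument is the equivalence in the remaining regime $\phi\notin\{0,2\pi\}$, $\theta\neq0$, $\Theta<\pi$, where I must show that (iii) is both necessary and sufficient.

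For that regime I would invoke the \emph{uniqueness} clause of Lemma~\ref{le:two} (Case~II): the decomposition $U=R_1^z R_-$ with $R_-=R(\theta_-,0,\phi_-)$ is unique, and likewise $U=R_+R_2^z$. Crucially, by symmetry the axis angles produced are $\theta_\pm=\mathrm{arccot}\!\left(\dfrac{\pm s_\psi c_{\phi/2}+c_\psi s_{\phi/2}c_\theta}{s_{\phi/2}s_\theta}\right)$ from Eq.~\eqref{tca}. So if $U\in\mathcal{G}_\xi\mathcal{G}_\xi$, say $U=R(\alpha,0,\cdot)R(\beta,0,\cdot)$ with $\alpha,\beta\in[0,\Theta]$, I want to argue that one of the two pieces can be rotated (via conjugation / absorption) so that it lies on the $\hat z$ boundary, reducing to a $z$-then-$\xi$ form; uniqueness then forces the resulting $\theta_-$ (or, for the mirrored factorization, $\theta_+$) to equal one of the computed $\theta_\pm$. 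The condition $\theta_\pm\in[0,\Theta]$ is precisely $\cot\theta_\pm\geq\cot\Theta$ (since $\cot$ is decreasing on $(0,\pi)$ and both angles are in $(0,\pi)$), i.e.\ Eq.~\eqref{two_cr} for the corresponding sign. Conversely, if \eqref{two_cr} holds for some sign, the explicit formulas \eqref{tc} in Lemma~\ref{le:two} hand us a valid decomposition $U=R_1^z R_\mp$ with $R_\mp\in\mathcal{G}_\xi$ and $R_1^z\in\mathcal{G}_z\subset\mathcal{G}_\xi$, so $U\in\mathcal{G}_\xi\mathcal{G}_\xi$. I should also check that the constraint $\hat n_i\neq\hat n_{i+1}$ is respected, which holds since in the nondegenerate regime $\theta_\pm\neq0$ and the two axes are genuinely distinct.

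The step I expect to be the main obstacle is the \emph{necessity} direction: showing that an arbitrary two-piece decomposition with both axes in the wedge can always be ``normalized'' to one in which the left (or right) piece sits exactly on a boundary axis, so that the uniqueness of Lemma~\ref{le:two} can be brought to bear. The clean way to see this is to observe that if $U=R(\alpha,0,\phi_a)R(\beta,0,\phi_b)$ with $\alpha,\beta\in[0,\Theta]$ and $\alpha\ne\beta$, then conjugating by $R(\hat y,t)$ for suitable $t$ rotates both axes rigidly within the $x$-$z$ plane; one cannot in general land either axis on $\hat z$ while keeping the other in the wedge, so instead I would directly compare the matrix entries. Writing out $|\mathrm{Re}(e_{12})|$ or the off-diagonal real part of $U$ (cf.\ Eq.~\eqref{re12}) gives $|s_{\phi_a/2}s_{\phi_b/2}\sin(\alpha-\beta)|$, a quantity that is invariant and nonzero, forcing $\phi_a,\phi_b\notin\{0,2\pi\}$ and $\alpha\ne\beta$; then the parametrization \eqref{SU2} of the product, matched against \eqref{SU2} for $U(\theta,\psi,\phi)$, yields algebraic relations from which $\cot\Theta\le\max\{\cot\xi_+,\cot\xi_-\}$ follows after routine trigonometric manipulation. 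In short, necessity reduces to: the unique pair $(\theta_+,\theta_-)$ extremizes the achievable axis angles among all two-piece factorizations, so a wedge decomposition exists iff the smaller of $\theta_+,\theta_-$ is $\le\Theta$, which is \eqref{two_cr}. Finally I would remark that case (iv) is the $\Theta=\pi$ specialization where \eqref{two_cr} is vacuous, completing the four-way disjunction.
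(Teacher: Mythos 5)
Your architecture coincides with the paper's: dispose of the degenerate cases (i), (ii), (iv) directly; in the nondegenerate regime invoke Lemma~\ref{le:two} and its uniqueness clause, and translate $\theta_\pm\in[0,\Theta]$ into Eq.~\eqref{two_cr} via the formulas \eqref{tca}. The sufficiency of (iii) is handled correctly and exactly as in the paper. The problem is the necessity direction, which you yourself flag as the main obstacle and then do not resolve. What must be shown is that $U\in\mathcal{G}_\xi\mathcal{G}_\xi$ implies $U\in\mathcal{G}_z\mathcal{G}_\xi\cup\mathcal{G}_\xi\mathcal{G}_z$, i.e.\ that an arbitrary factorization with both axes strictly inside the wedge can be renormalized so that one factor sits on the $\hat z$ boundary while the other \emph{stays inside the wedge}. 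You assert that after matching matrix entries this ``follows after routine trigonometric manipulation,'' and that the unique pair $(\theta_+,\theta_-)$ ``extremizes the achievable axis angles among all two-piece factorizations.'' Neither claim is established, and the second one is precisely the content of the theorem, not a remark. The paper spends Lemmas~\ref{le:unq}, \ref{le:bound1} and \ref{le:bound2} on exactly this point: using uniqueness (Lemma~\ref{le:unq}) to define implicit functions $\phi_3(x),\phi_4(x),y(x)$ with $R(x,0,\phi_3(x))R(y(x),0,\phi_4(x))=U_1U_2$, proving via the relation $\sin\frac{\phi_3(x_0)}{2}\sin\frac{\phi_3(x)}{2}\sin(x_0-x)=\sin\frac{\phi_4(x_0)}{2}\sin\frac{\phi_4(x)}{2}\sin(y(x_0)-y(x))$ that $y$ is continuous and monotone, and then sliding $x\to 0$ to conclude $y(0)\in[0,\Theta]$. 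Some argument of this kind (a one-parameter deformation with a monotonicity or intermediate-value control) is unavoidable; a static comparison of entries gives a transcendental system in $(\alpha,\beta,\phi_a,\phi_b)$ from which the bound $\min\{\theta_+,\theta_-\}\leqslant\Theta$ does not simply drop out.

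A secondary inaccuracy: the invariant you propose to exploit, $|\mathrm{Re}(e_{12})|=|\sin\frac{\phi_a}{2}\sin\frac{\phi_b}{2}\sin(\alpha-\beta)|$, equals $|\sin\frac{\phi}{2}\sin\theta\sin\psi|$ for the target $U(\theta,\psi,\phi)$ and therefore vanishes whenever $\psi=0$, even in the nondegenerate regime $\phi\notin\{0,2\pi\}$, $\theta\neq0$. So it is not ``nonzero'' in general and cannot by itself force $\alpha\neq\beta$ and $\phi_a,\phi_b\notin\{0,2\pi\}$; the corresponding step in the paper (Eq.~\eqref{re12}) is applied inside Lemma~\ref{le:unq} under hypotheses ($\theta_1<\theta_2$, angles in $(0,2\pi)$) that guarantee nonvanishing. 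To complete your proof you would need to either import Lemma~\ref{le:bound2} wholesale or reproduce its continuity argument.
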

        
        \begin{proof}

        \textbf{Case I} (i) $\phi\in\{0,2\pi\}$ or (ii) $\theta=0$:
        
        Eq.~\eqref{two_piece} can always be constructed by taking $R_1=R(\hat{z},\phi), R_2=R(\hat{z},0)$.~\\
               
        \textbf{Case II} $\phi\in(0,2\pi)\cup(2\pi,4\pi)$, and $\theta\in(0,\pi)$: 
        
        In such case, we should show that the existence of decomposition as Eq.~\eqref{two_piece} is equivalent to (iii) or (iv).
        
        According to Lemma \ref{le:two}, $U(\theta,\psi,\phi)$ can always be written as
        \begin{subequations}
        \begin{align}
        U(\theta,\psi,\phi)&=R(\hat{z},\phi_1)R(\theta_-,0,\phi_-), \label{two_1} \\ 
        U(\theta,\psi,\phi)&=R(\theta_+,0,\phi_+)R(\hat{z},\phi_2), \label{two_2}
        \end{align}        
        \end{subequations}
        for certain values of $\phi_{1,2}\in[0,4\pi)$, $\phi_{\pm}\in[0,4\pi)$, and
        \begin{subequations} \label{thetapm}
        \begin{eqnarray}
        \cot\theta_-= \frac{-\sin\psi\cos\frac{\phi}{2} + \cos\psi\sin\frac{\phi}{2}\cos\theta }{\sin\frac{\phi}{2}\sin\theta},\\ 
        \cot\theta_+= \frac{\sin\psi\cos\frac{\phi}{2} + \cos\psi\sin\frac{\phi}{2}\cos\theta }{\sin\frac{\phi}{2}\sin\theta}.
        \end{eqnarray}
        \end{subequations}
        We notice that in case II, the values of $\theta_{\pm}$ are unique. We introduce the following statements~\\~\\
        $a.$ $U(\theta,\psi,\phi)$ satisfies (iii) or (iv);~\\
        $b.$ $R(\theta_-,0,\phi_-)\in \mathcal{G}_{\xi},$ or $R(\theta_+,0,\phi_+)\in \mathcal{G}_{\xi}$; ~\\
        $c.$ $U(\theta,\psi,\phi)\in\mathcal{G}_{z}\mathcal{G}_{\xi}\cup\mathcal{G}_{\xi}\mathcal{G}_{z}$;~\\
        $d.$ $U(\theta,\psi,\phi)\in\mathcal{G}_\xi\mathcal{G}_\xi$.~\\
        
        From Eq.~\eqref{thetapm}, one can verify that $a\Leftrightarrow b$, and since the value of $\theta_{\pm}$ are unique, we have $b\Leftrightarrow c$.
        From Lemma \ref{le:bound2}, we know that $\mathcal{G}_{\xi}\mathcal{G}_{\xi}=\mathcal{G}_{z}\mathcal{G}_{\xi}\cup\mathcal{G}_{\xi}\mathcal{G}_{z} $, so $c\Leftrightarrow d$. Therefore, $a\Leftrightarrow d$, and Theorem.~\ref{th:two} holds.
        \end{proof}

\section{ Decomposition with two fixed axes}\label{sec:fix}
Here, we are given two fixed axes $\hat{z}=(0,0,0)$ and $\hat{m}=(\sin\Theta,0,\cos\Theta)$, and the angle between them is restricted to $\Theta\in(0,\frac{\pi}{2}]$. We are going to prove the criteria for fixed-axes decomposition  [Eq.~(8) and Eq.~(9)], and provide methods to construct the decomposition sequences. 

\subsection{Odd-piece decomposition}
\subsubsection{Criterion for odd-piece decomposition}
For odd-piece decomposition, i.e. $p=2l-1$ with $l\in \mathbb{Z}^{+}$, Eq.~\eqref{general} is equivalent to 
\begin{widetext}

\begin{equation}
U \left( \theta,\psi ,\phi \right) = R\left( \hat{z}, \beta_0\right) R\left( \hat{m}, \gamma_1 \right) R \left(\hat{z},\beta_1 \right) \ldots R\left(\hat{m}, \gamma_{l-1} \right)R(\hat{z},\beta_{l-1}),\label{odd1}
\end{equation}
or 
\begin{equation}
	 U \left( \theta,\psi,\phi \right) = R\left( \hat{m}, \beta_0\right) R\left( \hat{z}, \gamma_1 \right) R \left(\hat{m},\beta_1 \right) \ldots R\left(\hat{z}, \gamma_{l-1} \right)R(\hat{m},\beta_{l-1}),\label{odd2}
\end{equation}
where $\beta_i\in[0,4\pi)$, $\gamma_i\in[0,4\pi)$. We define 
\begin{subequations}
\begin{align}
\delta_1(\theta,\phi)&=\arcsin \left(\sin\theta\sin\frac{\phi}{2} \right), \label{delta_ele}\\
\delta_2(\theta,\psi,\phi,\Theta)&=\arcsin \left[ \sin\frac{\phi}{2}\sqrt{(\cos\Theta\cos\psi\sin\theta
-\cos\theta\sin\Theta)^{2}
+ (\sin\theta\sin\psi)^{2}} \right].
\end{align}
\end{subequations}
       	 \end{widetext}

Before giving the proof of theorem, we first provide some useful lemmas. 
\begin{lemma} ($z$-$m$-$z$ decomposition)\label{le:zmz}~\\
Given $U(\theta,\psi,\phi)\in\mathcal{A}$, $\Theta\in(0,\frac{\pi}{2}]$, and $\Psi\in[0,\pi]$, there exist certain values of $\beta'_{0,1}\in[0,4\pi)$, $\gamma'_{1}\in[0,4\pi)$, such that 
\begin{equation}
U(\theta,\psi,\phi)=R(\hat{z},\beta'_0)R(\Theta,\Psi ,\gamma'_1)R(\hat{z},\beta'_1), \label{three}
\end{equation}	
if and only if $|\delta_1(\theta,\phi)|\leqslant \Theta$.
\end{lemma}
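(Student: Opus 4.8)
The plan is to reduce the existence of the sequence \eqref{three} to a purely geometric condition on a single entry of the target matrix, and then identify that geometric condition with $|\delta_1(\theta,\phi)|\le\Theta$. First I would observe that, up to global phase, the product $R(\hat z,\beta_0')R(\Theta,\Psi,\gamma_1')R(\hat z,\beta_1')$ has the effect of conjugating the middle rotation: writing $R(\hat z,\beta_0')R(\Theta,\Psi,\gamma_1')R(\hat z,\beta_1') = R(\hat z,\beta_0') R(\Theta,\Psi,\gamma_1') R(\hat z,-\beta_0')\cdot R(\hat z,\beta_0'+\beta_1')$, the first three factors form a rotation by angle $\gamma_1'$ about the axis $R(\hat z,\beta_0')\hat n_{\Theta,\Psi}$, which is a unit vector making polar angle $\Theta$ with $\hat z$ (the azimuth is free, controlled by $\beta_0'$, and $\Psi$ is a fixed but irrelevant offset absorbed into $\beta_0'$). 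So the question becomes: for which $U(\theta,\psi,\phi)$ can one write $U = R(\hat v,\gamma_1')\,R(\hat z,\alpha)$ with $\hat v$ an axis at polar angle $\Theta$ from $\hat z$ and $\alpha=\beta_0'+\beta_1'$ free? This is exactly the content of Lemma \ref{le:two} applied with the roles of $\hat z$ and $\hat m$ interchanged: Lemma \ref{le:two} gives the unique decomposition $U = R(\theta_+,0,\phi_+)R(\hat z,\phi_2)$ with $\theta_+$ determined by \eqref{tca}, and the rotation-by-azimuth freedom lets us rotate that decomposition so that the first axis lies at polar angle $\Theta$ iff $\theta_+\le\Theta$ — but we also must allow the first axis to lie on the other side, so the condition is $\min\{\theta_+,\pi-\theta_+\}\le\Theta$ (or, allowing $\gamma_1'$ to range over $[0,4\pi)$, simply that the polar angle of the first-axis family, after also using $\phi\to\phi+2\pi$, can be brought within $[0,\Theta]$).

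Next I would make the geometry quantitative. The key invariant is $|\mathrm{Re}\,e_{12}|$-type quantity, or more directly the "effective rotation angle around the non-$z$ axis", which for the middle factor of a $z$-$m$-$z$ sequence is rigidly tied to the off-diagonal magnitude of the part of $U$ orthogonal to $\hat z$-rotations. Concretely, from the parametrization \eqref{SU2}, the quantity $\sin\theta\sin(\phi/2)$ equals $|e_{12}|=|e_{21}|$, and this is invariant under left/right multiplication by $z$-rotations. For a single rotation $R(\Theta,\Psi',\gamma')$ the corresponding quantity is $|\sin\Theta\sin(\gamma'/2)|\le\sin\Theta$ (when $\Theta\le\pi/2$; more care for the general sign of $\gamma'$, but $\gamma'$ ranges over $[0,4\pi)$ so $|\sin(\gamma'/2)|$ attains all values in $[0,1]$). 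Now $U = R(\hat z,\beta_0')\,R(\Theta,\Psi,\gamma_1')\,R(\hat z,\beta_1')$ forces $\sin\theta\sin(\phi/2) = \pm\sin\Theta\sin(\gamma_1'/2)$ up to the sign bookkeeping coming from $\phi,\gamma_1'\in[0,4\pi)$, hence a necessary condition is $|\sin\theta\sin(\phi/2)|\le\sin\Theta$, i.e. $|\delta_1(\theta,\phi)|\le\Theta$. For sufficiency, given $|\delta_1|\le\Theta$ I pick $\gamma_1'$ so that $|\sin\Theta\sin(\gamma_1'/2)| = |\sin\theta\sin(\phi/2)|$; this matches the off-diagonal magnitude, and then the two remaining real parameters $\beta_0',\beta_1'$ (together with the choice of sign and branch of $\gamma_1'$) give enough freedom to match the remaining three real parameters of $U\in SU(2)$ — one counts $1+3+1 = 5$ angle parameters mapping onto the $3$-dimensional $SU(2)$ with a $2$-dimensional fibre, consistent with the off-diagonal-magnitude constraint being the only obstruction. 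I would carry this out by writing the explicit product and solving: $\gamma_1'$ from the magnitude equation, then $\beta_0'-\beta_1'$ from the phase of the off-diagonal entry and the azimuth $\psi$, then $\beta_0'+\beta_1'$ from the diagonal entries.

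The main obstacle is the careful sign/branch bookkeeping caused by $\phi,\gamma_i\in[0,4\pi)$ rather than $[0,2\pi)$: the map $\phi\mapsto\sin(\phi/2)$ is not monotone and changes sign at $\phi=2\pi$, so "$|\sin\theta\sin(\phi/2)|\le\sin\Theta$" must be reconciled with the possibility of absorbing a factor $-\mathbbm{1}=R(\hat z,2\pi)$ into the $z$-rotations, and one must check that this absorption never costs an extra piece. Concretely I would verify: (a) $R(\hat z,2\pi)$ commutes with everything and can be merged into any adjacent $z$-factor without increasing $p$, so WLOG $\gamma_1'\in[0,2\pi]$ for the magnitude-matching step; (b) the resulting $\delta_1 = \arcsin(\sin\theta\sin(\phi/2))$ is well-defined as an element of $[-\pi/2,\pi/2]$ and $|\delta_1|\le\Theta$ is exactly $|\sin\theta\sin(\phi/2)|\le\sin\Theta$ since $\sin$ is increasing on $[0,\pi/2]$ and $\Theta\in(0,\pi/2]$; (c) the remaining linear system for $\beta_0',\beta_1'$ is always solvable. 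Everything else is the routine $2\times2$ matrix multiplication already used in the proof of Lemma \ref{le:two}, so I would state the explicit solution formulas for $\beta_0',\beta_1',\gamma_1'$ and check them by direct substitution rather than re-deriving them.
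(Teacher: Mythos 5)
Your proposal is correct and its operative argument (second and third paragraphs) is essentially the paper's proof: necessity follows from the fact that $|e_{12}|=\sin\theta\,|\sin(\phi/2)|$ is invariant under left/right multiplication by $z$-rotations and must equal $|\sin\Theta\sin(\gamma_1'/2)|\leqslant\sin\Theta$, and sufficiency is obtained by choosing $\gamma_1'$ to match this magnitude and then solving explicitly for $\beta_0',\beta_1'$. The conjugation/axis-tilting picture in your first paragraph is an unnecessary detour, but it does no harm since you discard it in favor of the direct invariant argument.
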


\begin{proof}

\textit{Necessity of $|\delta_1(\theta,\phi)|\leqslant \Theta$}:

Let $U(\theta,\psi,\phi)= \left[\begin{array}{cc}
	e_{11} & e_{12}\\
	e_{21} &e_{22}
	\end{array}\right],$ 
according to Eq.~\eqref{SU2} and Eq.~\eqref{delta_ele}, when Eq.~\eqref{three} holds, we have

\begin{equation}
|e_{12}|=\sin|\delta_1(\theta,\phi)|=\left|\sin\Theta\sin\frac{\gamma'_1}{2}\right|.\label{e12}
\end{equation} 
Obviously, for $|\delta_1(\theta,\phi)|>\Theta $, Eq.~\eqref{e12} cannot be satisfied for any $\gamma'_1$, so the decomposition as Eq.~\eqref{three} does not exist. ~\\

\textit{Sufficiency of $|\delta_1(\theta,\phi)|\leqslant \Theta$}:

When $\delta_1(\theta,\phi)\leqslant \Theta $ is satisfied, Eq.~\eqref{three} can be constructed as:
\begin{equation}
\gamma'_1=\pi\pm \left [2\arcsin \left(\frac{\sin\theta\sin\frac{\phi}{2}}{\sin\Theta} \right )-\pi \right], 
\end{equation}
and 
\begin{subequations}
\begin{eqnarray}
\beta'_0=\alpha_3-\alpha_1,\\
\beta'_1=\alpha_4-\alpha_2,
\end{eqnarray}
\end{subequations}
where
$\alpha_1=-\psi-\lambda_1 ,\alpha_2=\psi-\lambda_1,\alpha_3=-\Psi-\lambda_2,\alpha_4=\Psi-\lambda_2,$
and
\begin{subequations}	
\begin{eqnarray}
\lambda_1=\text{Arg} \left( \cos\frac{\phi}{2}+i\sin\frac{\phi}{2}\cos\theta \right),\\
\lambda_2=\text{Arg} \left( \cos\frac{\gamma'_1}{2}+i\sin\frac{\gamma'_1}{2}\cos\Theta \right).
\end{eqnarray}	\label{zmz_ed}
\end{subequations}
\end{proof}~\\

Since $|\delta_1(\theta,\phi)|\leqslant \frac{\pi}{2}$,  a three-piece decomposition for arbitrary rotations always exists when $\Theta=\pi/2$. In particular, we have the following corollary:
\begin{corollary} ($z$-$x$-$z$ decomposition) \label{co:zxz} ~\\
Given $U(\theta,\psi,\phi)\in \mathcal{A}$, it can always be decomposed as
\begin{equation}
U(\theta,\psi,\phi)=R(\hat{z},\beta_0)R(\hat{x},\gamma)R(\hat{z},\beta_1), \label{zxz}
\end{equation} 
where
\begin{subequations}
\begin{align}
\beta_0&=\rm{Arg}\left( \cos\frac{\phi}{2} +i\sin\frac{\phi}{2}\cos\theta \right)+\psi,\\
\beta_1&=\rm{Arg}\left( \cos\frac{\phi}{2} +i\sin\frac{\phi}{2}\cos\theta \right)-\psi,\\
\gamma&=2\arcsin\left(\sin\theta\sin\frac{\phi}{2}   \right).
\end{align}
\end{subequations}
\end{corollary}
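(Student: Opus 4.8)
The plan is to obtain Corollary~\ref{co:zxz} as the specialization of Lemma~\ref{le:zmz} to $\Theta=\pi/2$ and $\Psi=0$. With $\Theta=\pi/2$ the axis $(\sin\Theta,0,\cos\Theta)$ is just $\hat x$, so the middle factor $R(\Theta,\Psi,\gamma_1')=R(\pi/2,0,\gamma_1')=R(\hat x,\gamma_1')$, and Eq.~\eqref{three} collapses to exactly the $z$-$x$-$z$ form Eq.~\eqref{zxz}. So the corollary is not a new computation but a reading-off of the lemma in a degenerate parameter range.

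First I would check that the hypothesis $|\delta_1(\theta,\phi)|\le\Theta$ of Lemma~\ref{le:zmz} holds automatically here: by Eq.~\eqref{delta_ele}, $\delta_1(\theta,\phi)=\arcsin(\sin\theta\sin(\phi/2))\in[-\pi/2,\pi/2]$, hence $|\delta_1(\theta,\phi)|\le\pi/2=\Theta$ for every $U(\theta,\psi,\phi)\in\mathcal{A}$. Thus the lemma already guarantees existence of a decomposition of the claimed shape for an arbitrary target; only the explicit angles remain to be extracted.

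Then I would substitute $\Theta=\pi/2$, $\Psi=0$ into the construction written out in the proof of Lemma~\ref{le:zmz}. Taking the upper sign and using $\sin\Theta=1$ gives $\gamma_1'=\pi+[2\arcsin(\sin\theta\sin(\phi/2))-\pi]=2\arcsin(\sin\theta\sin(\phi/2))$, which is the stated $\gamma$. Since then $\gamma/2=\arcsin(\sin\theta\sin(\phi/2))\in[-\pi/2,\pi/2]$, one has $\cos(\gamma/2)\ge0$, and with $\cos\Theta=0$ the auxiliary phase in Eq.~\eqref{zmz_ed} reduces to $\lambda_2=\text{Arg}(\cos(\gamma/2))=0$; combined with $\Psi=0$ this gives $\alpha_3=\alpha_4=0$, so $\beta_0'=\alpha_3-\alpha_1=\lambda_1+\psi$ and $\beta_1'=\alpha_4-\alpha_2=\lambda_1-\psi$, where $\lambda_1=\text{Arg}(\cos(\phi/2)+i\sin(\phi/2)\cos\theta)$. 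These are exactly the stated $\beta_0,\beta_1$.

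There is essentially no conceptual obstacle; the one point I would treat with care is the degenerate locus $\sin\theta\sin(\phi/2)=\pm1$, where $\gamma=\pm\pi$ and $\lambda_2=\text{Arg}(0)$ is ill-defined as written. I would dispatch this boundary case — and, as a global sanity check, the identity itself — by a direct matrix computation: multiply out $R(\hat z,\beta_0)R(\hat x,\gamma)R(\hat z,\beta_1)$ using the parametrization Eq.~\eqref{SU2} with the above angles, compare the four entries with $U(\theta,\psi,\phi)$, and reduce $\beta_0,\beta_1,\gamma$ modulo $4\pi$ where needed. Hence the expected difficulty is purely bookkeeping (branch choices of $\text{Arg}$ and the measure-zero edge case), not anything structural.
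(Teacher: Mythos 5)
Your proposal is correct and follows exactly the paper's own route: the paper presents Corollary~\ref{co:zxz} with no separate proof, precisely as the specialization of Lemma~\ref{le:zmz} to $\Theta=\pi/2$, $\Psi=0$, noting only that $|\delta_1(\theta,\phi)|\leqslant\pi/2$ makes the hypothesis automatic. Your substitution into Eq.~\eqref{zmz_ed} reproduces the stated $\beta_0,\beta_1,\gamma$, and your flagging of the measure-zero locus $\sin\theta\sin(\phi/2)=\pm1$ (where $\lambda_2=\mathrm{Arg}(0)$ is formally undefined) is a reasonable extra care the paper does not bother with.
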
~\\
We now generalize Lemma \ref{le:zmz} to an arbitrary odd number of pieces.
       	 \begin{widetext}
\begin{lemma}\label{le:l}  ~\\
Given a rotation $U(\theta,\psi,\phi)\in\mathcal{A}$, there exist certain values of $\beta'_i\in[0,4\pi)$, $\gamma'_i\in[0,4\pi)$, and $l\in\mathbb{Z}^{+}$, such that 
	\begin{equation}
	 U \left( \theta,\psi ,\phi \right) =  R\left( \hat{z}, \beta'_0\right) R\left( \Theta,\Psi, \gamma'_1 \right) R \left(\hat{z},\beta'_2 \right) \ldots R\left(\Theta,\Psi, \gamma'_{l-1} \right)R(\hat{z},\beta'_{l-1}),  \label{odd_0}
	\end{equation}
         if and only if 
         \begin{equation}
         |\delta_1(\theta,\phi)|\leqslant (l-1)\Theta.\label{odd_cr}
         \end{equation} 
\end{lemma}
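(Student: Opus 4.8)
The plan is to reduce everything to the single scalar attached to the off-diagonal entry. For any $W\in\mathcal{A}$ written as in Eq.~\eqref{SU2}, one has $|W_{12}|=\sin|\delta_1(W)|$ with $|\delta_1(W)|\in[0,\pi/2]$, so $\delta_1$ is essentially the Bloch-sphere distance by which the rotation moves $|0\rangle$. Two elementary facts do the work. First, $\delta_1$ is subadditive, $|\delta_1(XY)|\le|\delta_1(X)|+|\delta_1(Y)|$: from $|(XY)_{12}|\le|X_{11}||Y_{12}|+|X_{12}||Y_{22}|=\cos|\delta_1(X)|\sin|\delta_1(Y)|+\sin|\delta_1(X)|\cos|\delta_1(Y)|$, plus a two-line case check when the right side would force an argument past $\pi/2$ (this is just the triangle inequality on the sphere). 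Second, $\delta_1(R(\hat z,\beta))=0$ for every $\beta$, whereas $|\delta_1(R(\Theta,\Psi,\gamma))|=\arcsin(|\sin(\gamma/2)|\sin\Theta)$ sweeps the whole interval $[0,\Theta]$ as $\gamma$ varies (here $\Theta\le\pi/2$ is used, which holds throughout Appendix~\ref{sec:fix}).

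Necessity is then immediate: if $U$ equals the product in Eq.~\eqref{odd_0}, applying subadditivity across the product, dropping the $l$ factors $R(\hat z,\cdot)$ (each contributing $0$), and bounding each of the $l-1$ middle factors $R(\Theta,\Psi,\cdot)$ by $\Theta$ gives $|\delta_1(\theta,\phi)|=|\delta_1(U)|\le(l-1)\Theta$.

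Sufficiency I would prove by induction on $l$. The base case $l=1$ is trivial ($|\delta_1|\le 0$ forces $U$ diagonal, hence $U=R(\hat z,\beta'_0)$ up to phase), and $l=2$ is exactly Lemma~\ref{le:zmz}. For the step, assume the claim for $l$ and take $U$ with $|\delta_1(U)|\le l\Theta$. The heart of the argument is a ``peeling'' sub-claim: there are $\beta,\gamma$ such that $U':=R(\Theta,\Psi,\gamma)^{-1}R(\hat z,\beta)^{-1}U$ satisfies $|\delta_1(U')|=\max\{0,\,|\delta_1(U)|-\Theta\}$, which by hypothesis is $\le(l-1)\Theta$. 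The lower bound $|\delta_1(U')|\ge|\delta_1(U)|-\Theta$ is the reverse triangle inequality; attainment is geometric. Left multiplication by $R(\hat z,\beta)^{-1}$ does not change $|\delta_1|$ but slides the point $U|0\rangle$ around its latitude circle $C_0$ (spherical radius $2|\delta_1(U)|$ about the north pole $|0\rangle$); then $R(\Theta,\Psi,\gamma)^{-1}$ carries that point along a circle about $\hat m$, whose closest approach to $|0\rangle$ is $|\,\Theta-\mathrm{dist}(\hat m,R(\hat z,\beta)^{-1}U|0\rangle)\,|$. Choosing $\beta$ so that this distance to $\hat m$ is as close to $\Theta$ as $C_0$ allows (it ranges over $[\,|\Theta-2|\delta_1(U)|\,|,\ \Theta+2|\delta_1(U)|\,]$), and then choosing the rotation amount $\gamma$ to realize the closest approach, produces exactly $\max\{0,|\delta_1(U)|-\Theta\}$. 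Applying the inductive hypothesis to $U'$ and prepending $R(\hat z,\beta)R(\Theta,\Psi,\gamma)$ yields the $(l+1)$-piece form; composing the explicit angle formulas of Lemma~\ref{le:zmz} at each peel gives closed expressions for the $\beta'_i,\gamma'_i$.

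I expect the attainment step of the peeling sub-claim to be the main obstacle, together with the bookkeeping around it: checking the spherical-triangle estimates without wrap-around trouble (harmless because $\Theta\le\pi/2$ and $|\delta_1|\le\pi/2$, but one should either restrict to the relevant great circle or, more safely, redo the estimate purely algebraically on $|U'_{12}|$), and confirming that one fixed azimuth $\Psi$ suffices for all middle rotations (it does, since each peel reuses the same $\hat m$). Everything else is the triangle inequality plus the $l=2$ case already established.
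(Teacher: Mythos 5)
Your argument is correct, and its sufficiency half takes a genuinely different route from the paper's. On necessity the two proofs are really the same estimate in different packaging: the paper expands each middle factor via the $z$-$x$-$z$ form of Corollary~\ref{co:zxz} and proves $|b_{t,11}|\geqslant\cos(t\Theta)$ by induction on the partial products, which is precisely your subadditivity $|\delta_1(XY)|\leqslant|\delta_1(X)|+|\delta_1(Y)|$ applied one factor at a time; your phrasing is cleaner since it never leaves the two given axes. On sufficiency the paper does not peel: it writes $U=R(\hat z,\lambda_1+\psi)\,R(\hat x,2\delta_1)\,R(\hat z,\lambda_1-\psi)$ by Corollary~\ref{co:zxz}, splits the middle rotation into $l-1$ identical factors $R\bigl(\hat x,\tfrac{2\delta_1}{l-1}\bigr)$, each of which satisfies the hypothesis of Lemma~\ref{le:zmz} because $\tfrac{\delta_1}{l-1}\leqslant\Theta$, and substitutes the same $z$-$m$-$z$ triple for every piece, so that Eq.~\eqref{odd_0} comes out with all $\gamma'_j$ equal and all interior $\beta'_i$ equal and every angle in closed form at once. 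Your greedy peeling, which strips one $R(\hat z,\beta)R(\Theta,\Psi,\gamma)$ pair and lowers $|\delta_1|$ by exactly $\min\{\Theta,|\delta_1|\}$, is also sound; the attainment step you flagged does go through, since the spherical distance $a$ from the latitude circle of radius $2|\delta_1(U)|$ to $\hat m$ varies continuously over an interval whose lower endpoint is $\bigl|\,2|\delta_1(U)|-\Theta\,\bigr|$, that interval contains $\Theta$ exactly when $|\delta_1(U)|\leqslant\Theta$, and a subsequent rotation about $\hat m$ always realizes the closest approach $|a-\Theta|$ (the candidate point lies on the great circle through $\hat m$ and the pole, and $2\max\{0,|\delta_1(U)|-\Theta\}\leqslant\pi$ rules out wrap-around). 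The trade-off is that the paper's equal-splitting construction hands you the explicit angles of Sec.~\ref{sec:odd} immediately, whereas your recursion must compose the Lemma~\ref{le:zmz} formulas at each level; in exchange your version isolates the reusable facts (subadditivity of $\delta_1$ and a one-step maximal-descent lemma) that make the bound $(l-1)\Theta$ conceptually transparent.
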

~\\
\end{widetext}
\begin{proof}

\textbf{Case (i)}: $l=1$.

In this case, Eq.~\eqref{odd_0} and Eq.~\eqref{odd_cr} become 
\begin{subequations}
\begin{align}
U \left( \theta,\psi ,\phi \right) &=  R\left( \hat{z}, \beta'_0\right) \label{odd_a},\\
|\delta_1(\theta,\phi)|& = 0. \label{odd_b}
\end{align} 
\end{subequations}
Obviously, both Eq.~\eqref{odd_a} and Eq.~\eqref{odd_b} are equivalent to $\theta=0$ or $\phi\in\{0,2\pi\}$.~\\

\textbf{Case (ii)}: $l>1$.

\textit{Necessity of $|\delta_1(\theta,\phi)|\leqslant (l-1)\Theta$}:~\\
 Let 
\begin{subequations}
\begin{align}
\gamma''_i=\gamma'_i~\rm{mod}~2\pi,\\
\beta''_i=\beta'_i~\rm{mod}~2\pi,
\end{align} 
\end{subequations}
Eq.~\eqref{odd_0} is equivalent to 
		\begin{align}
	 U \left( \theta,\psi ,\phi \right) 
	 &=\pm R\left( \hat{z}, \beta''_0\right) R\left( \Theta,\Psi, \gamma''_1 \right)  \cdots R(\hat{z},\beta''_{l-1}).\label{odd}
	\end{align}
	According to corollary \ref{co:zxz}, one can apply the $z$-$x$-$z$ decomposition on each $R(\Theta,\Psi,\gamma''_{i})$. So if Eq.~\eqref{odd_0} holds, $U(\theta,\psi,\phi)$ can be further rewritten as 
	\begin{align}
	 &U \left(\theta,\psi,\phi \right) \notag\\
	 &= \pm R \left(\hat{z}, \eta_0\right) R\left(\hat{x}, \rho_1 \right) R \left(\hat{z}, \eta_1 \right) \ldots R
	\left(\hat{x}, \rho_{l-1} \right) R \left(\hat{z}, \eta_{l-1} \right),   \label{odd_zxz}
	\end{align}
	for certain values of $\eta_i\in [0,2\pi)$, and $\rho_i= 2\arcsin\left(\sin\Theta\sin\frac{\gamma''_1}{2}\right)$. Since $\Theta\in(0,\frac{\pi}{2}]$, $\gamma''_i\in[0,2\pi)$, we have  
	\begin{eqnarray}
	0 \leqslant \rho_i\leqslant 2\Theta\leqslant \pi. \label{ineq:rho}
	\end{eqnarray}
	We give two statements:
	(a)  $|\delta_1(\theta,\phi)|>(l-1)\Theta$, and	
	(b)  Eq.~\eqref{odd_zxz} holds.
	
	Since Eq.~\eqref{odd_zxz} is equivalent to Eq.~\eqref{odd_0}, to prove the necessity of Lemma \ref{le:l}, we only need to show that (a) and (b) cannot be satisfied at the same time. In the following, we assume that both (a) and (b) are satisfied.

	 We define 
	 \begin{align}
	 B_t &\equiv
	\left[\begin{array}{cc}
	b_{t,11} & b_{t,12}\\
	b_{t,21} & b_{t,22}
	\end{array}\right]\notag\\
	&=R \left(\hat{z}, \eta_0 \right) R \left(\hat{x}, \rho_1 \right) R \left(\hat{z}, \eta_1\right) \ldots R \left(\hat{x}, \rho_t \right) R \left(\hat{z}, \eta_t \right), 
	\end{align}
	where $t \leqslant l-1$. Note that $B_{l-1}=R(\theta,\psi,\phi)$, and
	\begin{equation}
	|b_{l-1,11}|=\cos\delta_1(\theta,\phi). \label{bd}
	\end{equation}	
	Then, the value of $|b_{t,11}|$ will be bounded by induction as follows.
	
	For $t = 1$, Eq.~\eqref{ineq:rho} implies that $|
	b_{1,11} | = \cos \frac{\rho_1}{2} \geqslant \cos \Theta$;	
	for $1<t\leqslant (l-1)$, we suppose $| b_{t-1,11} | \geqslant \cos \left[ (t-1) \Theta \right]$ holds.
	 One can let 
	 \begin{subequations}
	 \begin{eqnarray}
	 b_{t-1,11} =e^{i \varphi_1} \cos \alpha,\\
	 b_{t-1,12} = e^{i \varphi_2} \sin \alpha,
	 \end{eqnarray}
	 \end{subequations}
	  for certain values of $0\leqslant\alpha \leqslant (t-1) \Theta$, and $0\leqslant\varphi_{1,2}<2\pi$.
	Since 
	\begin{equation}
	 \left[\begin{array}{cc}
	b_{t,11} & b_{t,12}\\
	b_{t,21} & b_{t,22}
	\end{array}\right] = \left[\begin{array}{cc}
	b_{t-1,11} & b_{t-1,12}\\
	b_{t-1,21} & b_{t-1,22}
	\end{array}\right] R_{\hat{x}} \left( \rho_{t} \right)
	R_{\hat{z}} \left( \eta_{t} \right), 
	\end{equation}
	we have 
	\begin{equation}
	 b_{t,11} = e^{i \varphi_1} \cos \alpha \cos \frac{\rho_{t}}{2} - i e^{i \varphi_2}
	\sin \alpha \sin \frac{\rho_{t}}{2} . 
	\end{equation}
	
	\begin{widetext}
	Then
	\begin{align}
	| b_{t,11} |^2 = & \cos^2 \alpha \cos^2 \frac{\rho_{t}}{2} + \sin^2 \alpha \sin^2
	\frac{\rho_{t}}{2} + 2 \sin \left( \varphi_1 - \varphi_2 \right) \cos \alpha \cos
	\frac{\rho_{t}}{2} \sin \alpha \sin \frac{\rho_{t}}{2} & \notag\\
	= & \left( \cos \alpha \cos \frac{\rho_{t}}{2} - \sin \alpha \sin \frac{\rho_{t}}{2}
	\right)^2 - 2 \left( 1 - \sin \left( \varphi_1 - \varphi_2 \right) \right)
	\cos \alpha \cos \frac{\rho_{t}}{2} \sin \alpha \sin \frac{\rho_{t}}{2} & \notag\\
	\geqslant & \left( \cos \alpha \cos \frac{\rho_{t}}{2} - \sin \alpha \sin \frac{\rho_{t}}{2} \right)^2 &\notag \\
	= & \cos^2 \left( \alpha + \frac{\rho_{t}}{2} \right) &\notag \\
	\geqslant & \cos^2  t \Theta . & 
	\end{align}
	\end{widetext}

	The last inequality is due to $\alpha \leqslant (t-1)\Theta$, $0\leqslant \rho_t \leqslant 2\Theta$, and $t\Theta\leqslant (l-1)\Theta<\delta_1(\theta,\phi)\leqslant \frac{\pi}{2}$. Therefore, if both (a) and (b) hold true, we have $| b_{t,11} | \geqslant \cos \left( t\Theta \right)$ for $1\leqslant t\leqslant l$, which also gives
	\begin{equation}
	| b_{l-1,11} | \geqslant \cos (l-1)\Theta.\label{bound}
	\end{equation}
Combining Eq.~\eqref{bd}, Eq.~\eqref{bound} and $\delta_1(\theta,\phi)\in[0,\frac{\pi}{2}]$, we have $\delta_1(\theta,\phi) \leqslant (l-1)\Theta$. However, this is contradicted to (a). Therefore (a) and (b) cannot be satisfied at the same time, which finish the proof of necessity.~\\
 	
\textit{Sufficiency of $|\delta_1(\theta,\phi)|\leqslant (l-1)\Theta$}:
	
The sufficiency will be proven constructively. According to Corollary \ref{co:zxz}, $U \left( \theta,\psi,\phi \right)$  can first be decomposed as:
	\begin{align}
        U \left( \theta,\psi,\phi \right)=R \left(\hat{z},\lambda_1+\psi \right)  R \left(\hat{x}, 2\delta_1(\theta,\phi) \right)  R \left(\hat{z}, \lambda_1-\psi \right),
         \end{align}
        where 
        \begin{equation}
        \lambda_1=\text{Arg} \left( \cos\frac{\phi}{2}+i\sin\frac{\phi}{2}\cos\theta \right).
        \end{equation}
     The   $x$ rotation in the middle can be divided into $l-1$ pieces, and we get 
	\begin{align}
        &U \left( \theta,\psi,\phi \right)\notag\\
        &= R \left(\hat{z},\lambda_1+\psi \right) \left[ R \left(\hat{x}, \frac{2\delta_1(\theta,\phi)}{l-1} \right) \right]^{l-1} R \left(\hat{z}, \lambda_1-\psi \right).\label{A6}
        \end{align}

We notice that $R\left(\hat{x}, \frac{2\delta_1(\theta,\phi)}{l-1} \right)=R\left(\frac{\pi}{2},0, \frac{2\delta_1(\theta,\phi)}{l-1} \right)$, and  $\delta_1 \left(\frac{\pi}{2},\frac{2\delta_1(\theta,\phi)}{l-1} \right)=\frac{\delta_1(\theta,\phi)}{l-1}\leqslant\Theta$. According to Lemma \ref{le:zmz}, when $|\delta_1(\theta,\phi)|\leqslant (l-1)\Theta$, we can have the decomposition $R(\hat{x},\frac{2\delta_1(\theta,\phi)}{l-1})=R(\hat{z},-\Psi-\chi)R(\Theta,\Psi,\gamma')R(\hat{z},\Psi-\chi)$, with
\begin{equation}
\gamma'=\pi\pm\left[2\arcsin \left( \frac{\sin\frac{\delta_1(\theta,\phi)}{l-1}}{\sin\Theta} \right) -\pi\right] ,
\end{equation} 
and 
\begin{equation}
\chi=\text{Arg} \left( \cos\frac{\gamma'}{2}+i\sin\frac{\gamma'}{2}\cos\Theta \right).
\end{equation}
So Eq.~\eqref{odd_0} can be constructed by taking.
\begin{equation}  
\beta'_i = \left\{  
             \begin{array}{ll}  
             \lambda_1+\psi-\Psi-\chi,& i=0, \\  
             -2\chi,                               & 0<i<l-1, \\  
             \lambda_1-\psi+\Psi-\chi, &i=l-1,    
             \end{array}  
\right. \label{A10}
\end{equation} 
and 
\begin{equation}
\gamma'_j=\gamma'
\end{equation}
 for $1\leqslant j\leqslant l-1$.
\end{proof}~\\

The following theorem corresponds to the odd-piece decomposition.
\begin{theorem} \label{th:odd}
Given $U(\theta,\psi,\phi)\in\mathcal{A}$, it can be decomposed to $2l-1$ pieces with $l\in \mathbb{Z}^{+}$ 

(i) as Eq.~\eqref{odd1} with certain values of $\beta_i\in[0,4\pi)$, $\gamma_i\in[0,4\pi)$, if and only if 
\begin{equation}
\left|\delta_1(\theta,\phi)\right| \leqslant (l-1)\Theta, \label{crodd_1}
\end{equation}
or (ii) as Eq.~\eqref{odd2} with certain values of $\beta_i\in[0,4\pi)$, $\gamma_i\in[0,4\pi)$, if and only if 
\begin{equation}
\left|\delta_2(\theta,\psi,\phi,\Theta)\right| \leqslant (l-1)\Theta. \label{crodd_2}
\end{equation}

\end{theorem}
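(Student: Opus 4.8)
The plan is to obtain both parts as consequences of Lemma~\ref{le:l}. Part~(i) requires essentially no work: Lemma~\ref{le:l} already establishes, for any fixed intermediate azimuth $\Psi\in[0,\pi]$, that $U(\theta,\psi,\phi)$ admits the $(2l-1)$-piece decomposition $R(\hat z,\cdot)R(\Theta,\Psi,\cdot)\cdots R(\hat z,\cdot)$ of Eq.~\eqref{odd_0} if and only if $|\delta_1(\theta,\phi)|\le(l-1)\Theta$. Taking $\Psi=0$ identifies the intermediate axis $(\sin\Theta\cos\Psi,\sin\Theta\sin\Psi,\cos\Theta)$ with $\hat m=(\sin\Theta,0,\cos\Theta)$, so Eq.~\eqref{odd_0} becomes Eq.~\eqref{odd1} verbatim (up to relabelling the $\beta'_i,\gamma'_i$) and Eq.~\eqref{odd_cr} becomes Eq.~\eqref{crodd_1}. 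That disposes of~(i).

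For part~(ii) I would reduce the $m$-$z$-$m$-$\cdots$ problem to the $z$-$m$-$z$-$\cdots$ problem of~(i) by conjugating with the rotation that interchanges $\hat z$ and $\hat m$. Let $\hat b=(\sin\tfrac{\Theta}{2},0,\cos\tfrac{\Theta}{2})$ be the bisector of $\hat z$ and $\hat m$ and put $V=R(\hat b,\pi)=-i\,\bm\sigma\cdot\hat b$, so $V^2=-I$. Using the identity $(\bm\sigma\cdot\hat a)\,\sigma_k\,(\bm\sigma\cdot\hat a)=2\hat a_k\,(\bm\sigma\cdot\hat a)-\sigma_k$ (valid for any unit $\hat a$) one checks $V\sigma_zV^\dagger=\bm\sigma\cdot\hat m$, and then, by $V^2=-I$, also $V(\bm\sigma\cdot\hat m)V^\dagger=\sigma_z$; hence conjugation by $V$ swaps $R(\hat z,\alpha)\leftrightarrow R(\hat m,\alpha)$ for every angle $\alpha$. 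Therefore $U$ satisfies Eq.~\eqref{odd2} \emph{if and only if} $VUV^\dagger$ satisfies Eq.~\eqref{odd1}, which by part~(i) is equivalent to $|\delta_1(\theta',\phi')|\le(l-1)\Theta$ with $(\theta',\psi',\phi')$ the standard parameters of $VUV^\dagger$.

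The remaining step is to verify $|\delta_1(\theta',\phi')|=|\delta_2(\theta,\psi,\phi,\Theta)|$. Writing $U=\cos\tfrac{\phi}{2}\,I-i\sin\tfrac{\phi}{2}\,\bm\sigma\cdot\hat n$ with $\hat n=(\sin\theta\cos\psi,\sin\theta\sin\psi,\cos\theta)$, conjugation gives $VUV^\dagger=\cos\tfrac{\phi}{2}\,I-i\sin\tfrac{\phi}{2}\,\bm\sigma\cdot\hat n''$, where $\hat n''=2(\hat n\cdot\hat b)\hat b-\hat n$ is the image of $\hat n$ under the rotation by $\pi$ about $\hat b$. A one-line computation yields $\hat n''=(\sin\Theta\cos\theta-\cos\Theta\sin\theta\cos\psi,\,-\sin\theta\sin\psi,\,\cos\Theta\cos\theta+\sin\Theta\sin\theta\cos\psi)$, hence $(\hat n''_x)^2+(\hat n''_y)^2=(\cos\Theta\cos\psi\sin\theta-\cos\theta\sin\Theta)^2+(\sin\theta\sin\psi)^2$. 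Since conjugation preserves the rotation angle and the modulus of the off-diagonal entry of the $SU(2)$ matrix, and since by Eq.~\eqref{SU2} and Eq.~\eqref{delta_ele} that modulus equals $\sin|\delta_1|$, we get $\sin|\delta_1(\theta',\phi')|=|\sin\tfrac{\phi}{2}|\sqrt{(\hat n''_x)^2+(\hat n''_y)^2}=\sin|\delta_2(\theta,\psi,\phi,\Theta)|$; as both angles lie in $[0,\pi/2]$ they are equal, and~(ii) follows.

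The only subtlety, and thus the main thing to be careful about, is placing $(\theta',\psi',\phi')$ in the declared ranges $\theta'\in[0,\pi)$, $\psi'\in[0,\pi)$, $\phi'\in[0,4\pi)$: if $\hat n''$ has azimuth in $[\pi,2\pi)$ one replaces $(\theta'',\psi'',\phi)$ by $(\pi-\theta'',\psi''-\pi,4\pi-\phi)$, which represents the same unitary and leaves $\sin\theta'\,|\sin\tfrac{\phi'}{2}|=\sin|\delta_1(\theta',\phi')|$ unchanged, so the equivalence above is unaffected. Everything else is elementary trigonometry, and since the argument uses only Corollary~\ref{co:zxz} and Lemma~\ref{le:l} there is no circularity with the bulk-to-boundary Theorem~\ref{th:s}.
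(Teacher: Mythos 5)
Your proposal is correct and follows essentially the same route as the paper: part (i) is obtained by setting $\Psi=0$ in Lemma~\ref{le:l}, and part (ii) by conjugating with the $\pi$-rotation about the bisector $(\sin\frac{\Theta}{2},0,\cos\frac{\Theta}{2})$ — exactly the paper's transformation $U\rightarrow R(\Theta/2,0,-\pi)\,U\,R(\Theta/2,0,\pi)$ — which swaps $\hat z\leftrightarrow\hat m$ and reduces Eq.~\eqref{odd2} to Eq.~\eqref{odd1}. Your explicit computation of $\hat n''$ and of $(\hat n''_x)^2+(\hat n''_y)^2$ simply fills in the step the paper labels ``straightforward to verify,'' namely $|\delta_1(\tilde\theta,\tilde\phi)|=|\delta_2(\theta,\psi,\phi,\Theta)|$.
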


\begin{proof}

For case (i), by taking $\Psi=0$ in Lemma \ref{le:l}, one can verify that Theorem \ref{th:odd} holds true.

For case (ii), we first apply the transformation $U\rightarrow R(\Theta/2,0,-\pi) U R(\Theta/2,0,\pi)$ [rotating all axes around $(\sin\Theta/2,0,\cos\Theta/2)$ by angle $\pi$] on Eq.~\eqref{odd2}, and obtain:
\begin{align}
	& U \left( \tilde{\theta},\tilde{\psi},\tilde{\phi} \right) \notag\\
	 &= R\left( \hat{z}, \beta_0\right) R\left( \hat{m},0, \gamma_1 \right)  \ldots R\left(\hat{z},0, \gamma_{l-1} \right)R(\hat{z},0, \beta_{l-1}) \label{mzm_2}
\end{align}
where $U(\tilde{\theta},\tilde{\psi},\tilde{\phi})=R(\Theta/2,0,-\pi)U(\theta,\psi,\phi)R(\Theta/2,0,\pi)$. It is straightforward to verify 
\begin{equation}
\left|\delta_1(\tilde{\theta},\tilde{\phi})\right|=\left|\delta_2(\theta,\psi,\phi,\Theta)\right|.
\end{equation}
Therefore, case (ii) of Theorem \ref{th:odd} also holds true. 
\end{proof}

\subsubsection{Constructing the odd-piece decomposition}\label{sec:odd}
When Eq.~\eqref{crodd_1} is satisfied, Eq.~\eqref{odd1} can be constructed as
\begin{equation}  
\beta_i = \left\{  
             \begin{array}{ll}  
             \lambda_1-\lambda_2+\psi &\qquad i=0, \\  
             -2\lambda_2                               &\qquad 0<i<l-1, \\  
             \lambda_1-\lambda_2-\psi &\qquad i=l-1,   
             \end{array}  
\right. 
\end{equation} 
and 
\begin{equation}
\gamma_j= \begin{array}{ll}   \gamma,&1\leqslant j\leqslant l-1,\end{array}
\end{equation}
where
\begin{subequations}
\begin{align}
\gamma&=\pi\pm\left[2\arcsin \left( \frac{\sin\frac{\delta_1(\theta,\phi)}{l-1}}{\sin\Theta} \right) -\pi\right],\\
\lambda_1&=\text{Arg} \left( \cos\frac{\phi}{2}+i\sin\frac{\phi}{2}\cos\theta \right),\\
\lambda_2&=\text{Arg} \left( \cos\frac{\gamma}{2}+i\sin\frac{\gamma}{2}\cos\Theta \right).
\end{align}
\end{subequations}
Similarly, when Eq.~\eqref{crodd_2} is satisfied, to construct Eq.~\eqref{odd2} we first calculate
\begin{equation}
U(\tilde{\theta},\tilde{\psi},\tilde{\phi})=R(\Theta/2,0,-\pi)R(\theta,\psi,\phi)R(\Theta/2,0,\pi),
\end{equation}
and the corresponding values of $\tilde{\theta},\tilde{\psi}$ and $\tilde{\phi}$. Then, we have 
\begin{equation}  
\beta_i = \left\{  
             \begin{array}{ll}  
             \tilde{\lambda}_1-\tilde{\lambda}_2+\tilde{\psi} &\qquad i=0, \\  
             -2\tilde{\lambda}_2                               &\qquad 0<i<l-1, \\  
             \tilde{\lambda}_1-\tilde{\lambda}_2-\tilde{\psi} &\qquad i=l-1,  
             \end{array}  
\right. 
\end{equation} 
and 
\begin{equation}
\gamma_j= \begin{array}{ll}   \tilde{\gamma},&1\leqslant j\leqslant l-1,\end{array}
\end{equation}
where
\begin{subequations}
\begin{align}
\tilde{\gamma}&=\pi\pm\left[2\arcsin \left( \frac{\sin\frac{\delta_1(\widetilde{\theta},\widetilde{\phi})}{l-1}}{\sin\Theta} \right) -\pi\right],\\
\tilde{\lambda}_1&=\text{Arg} \left( \cos\frac{\tilde{\phi}}{2}+i\sin\frac{\tilde{\phi}}{2}\cos\theta \right),\\
\tilde{\lambda}_2&=\text{Arg} \left( \cos\frac{\tilde{\gamma}}{2}+i\sin\frac{\tilde{\gamma}}{2}\cos\Theta \right).
\end{align}
\end{subequations}
It should be notice that this decomposition method is not the unique one.

 \subsection{Even-piece decomposition}
 
 \subsubsection{Criterion for even-piece decomposition}
 For even-piece decomposition, i.e. $p=2l$ with $l\in \mathbb{Z}^{+}$, Eq.~\eqref{general} is equivalent to 
\begin{equation}
U \left( \theta,\psi ,\phi \right) = R\left( \hat{m}, \beta_1 \right) R \left(\hat{z},\gamma_1 \right) \ldots R\left(\hat{m}, \beta_l \right)R(\hat{z},\gamma_l),\label{even1}
\end{equation}
or 
\begin{equation}
 U \left( \theta,\psi,\phi \right) = R\left( \hat{z}, \beta_1 \right) R \left(\hat{m},\gamma_1 \right) \ldots R\left(\hat{z}, \beta_l \right)R(\hat{m},\gamma_l),\label{even2}
\end{equation}
where $\beta_i=\alpha_{2i-1} \in[0,4\pi)$ and $\gamma_{i}=\alpha_{2i}\in[0,4\pi)$.
We denote
 \begin{align}
    A =& \left( \cos \psi \cos \Theta \sin \theta \sin \frac{\phi}{2} - \sin \Theta \cos \theta \sin \frac{\phi}{2} \right)^2 \notag\\
    &+ \left( \sin \psi \cos \Theta \sin \theta \sin \frac{\phi}{2} - \sin
  \Theta \cos \frac{\phi}{2} \right)^2,\notag\\
  B = & \left( \sin \theta \sin \frac{\phi}{2} \right)^2,\notag\\ 
  C = & \sin \Theta \sin \theta \sin \frac{\phi}{2} (\sin \psi \sin \frac{\phi}{2}\cos \theta -
 \cos \psi  \cos \frac{\phi}{2}),
\end{align}

and 
 \begin{align}
 \Lambda (\theta,\psi,\phi,\Theta)=\arcsin \sqrt{ \frac{A + B}{2} - \sqrt{C^2 + \frac{(B - A)^2}{4}} }.
 \end{align}
 \begin{theorem}\label{th:even}
 
 Given $R(\theta,\psi,\phi)\in\mathcal{A}$, it can be decomposed to $2l$ pieces with $l\in \mathbb{Z}^{+}$
 
 (i) as  Eq.~\eqref{even1} with certain values of $\beta_i\in[0,4\pi)$, $\gamma_i\in[0,4\pi)$, if and only if 
 \begin{equation}
 \Lambda(\theta,\psi,\phi,\Theta)\leqslant (l-1)\Theta, \label{even_cr1}
 \end{equation}
or (ii) as Eq.~\eqref{even2} with certain values of $\beta_i\in[0,4\pi)$, $\gamma_i\in[0,4\pi)$, if and only if 
 \begin{equation}
 \Lambda(\theta,\psi,-\phi,\Theta)\leqslant (l-1)\Theta. \label{even_cr2}
 \end{equation}
 \end{theorem}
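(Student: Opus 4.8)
\emph{Strategy: peel one boundary rotation and invoke the odd-piece result.} Consider part~(i), the decomposition~\eqref{even1}, an alternating product of $2l$ factors beginning with $R(\hat m,\beta_1)$. Writing $U=R(\hat m,\beta_1)\,V$, the residual factor $V=R(\hat m,-\beta_1)U=R(\hat z,\gamma_1)R(\hat m,\beta_2)\cdots R(\hat z,\gamma_l)$ is an alternating product of $2l-1$ factors that begins and ends with a $\hat z$-rotation, i.e.\ a sequence of exactly the form~\eqref{odd1} with the same $l$; conversely, prepending an arbitrary $R(\hat m,\beta_1)$ to any $(2l-1)$-piece sequence of the form~\eqref{odd1} yields a legitimate $2l$-piece sequence of the form~\eqref{even1} (the adjacent-axis constraint is automatic, since everything alternates). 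Hence $U$ admits a $2l$-piece decomposition~\eqref{even1} if and only if there is some $\beta\in[0,4\pi)$ for which $R(\hat m,-\beta)U$ admits a $(2l-1)$-piece decomposition~\eqref{odd1}, which by Theorem~\ref{th:odd}(i) holds if and only if $|\delta_1|\le(l-1)\Theta$ when $\delta_1$ is evaluated on the parameters of $R(\hat m,-\beta)U$. The problem is thus reduced to computing $\min_{\beta}|\delta_1(R(\hat m,-\beta)U)|$ and showing it equals $\Lambda(\theta,\psi,\phi,\Theta)$.

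\emph{Computing the minimum.} Since $\sin|\delta_1(X)|=|X_{12}|$ for every $X$ [compare~\eqref{SU2} and~\eqref{delta_ele}], I would expand the $(1,2)$ entry of $X=R(\hat m,-\beta)U$ from the matrix form~\eqref{SU2} of $R(\hat m,-\beta)=R(\Theta,0,-\beta)$: it is linear in $\cos\frac\beta2$ and $\sin\frac\beta2$, namely $X_{12}=\cos\frac\beta2\,P+\sin\frac\beta2\,Q$ with $P=U_{12}$ and $Q=i(\cos\Theta\,U_{12}+\sin\Theta\,U_{22})$. Using $\cos^2\frac\beta2=\tfrac12(1+\cos\beta)$, $\sin^2\frac\beta2=\tfrac12(1-\cos\beta)$ and $2\cos\frac\beta2\sin\frac\beta2=\sin\beta$, the quantity $|X_{12}|^2$ equals $\tfrac12(|P|^2+|Q|^2)$ plus the sinusoid $\tfrac12(|P|^2-|Q|^2)\cos\beta+\operatorname{Re}(P\bar Q)\sin\beta$, so $\min_\beta|X_{12}|^2=\tfrac12(|P|^2+|Q|^2)-\sqrt{\tfrac14(|P|^2-|Q|^2)^2+\operatorname{Re}(P\bar Q)^2}$. (Ranging $\beta$ over a circle rather than $[0,4\pi)$ is harmless because $R(\hat m,-\beta)$ and $R(\hat m,-\beta-2\pi)$ differ only by a global sign, which leaves $|\delta_1|$ unchanged.) What remains is a direct trigonometric identification, starting from $U_{12}=-i\sin\frac\phi2\sin\theta\,e^{-i\psi}$ and $U_{22}=\cos\frac\phi2+i\sin\frac\phi2\cos\theta$: one checks $|P|^2=B$, $|Q|^2=A$ and $\operatorname{Re}(P\bar Q)=C$ for $A,B,C$ exactly as defined above the statement, the $A$ and $C$ computations involving a couple of cancelling cross terms. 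Then $\min_\beta\sin^2|\delta_1|=\tfrac12(A+B)-\sqrt{C^2+\tfrac14(B-A)^2}=\sin^2\Lambda(\theta,\psi,\phi,\Theta)$ (the radicand being automatically in $[0,1]$ as a minimum of $|X_{12}|^2$), and since $|\delta_1|,\Lambda\in[0,\pi/2]$ this gives $\min_\beta|\delta_1(R(\hat m,-\beta)U)|=\Lambda(\theta,\psi,\phi,\Theta)$, proving part~(i).

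\emph{Part (ii) by inversion.} Inverting~\eqref{even1} for $U$ gives $U^{-1}=R(\hat z,-\gamma_l)R(\hat m,-\beta_l)\cdots R(\hat z,-\gamma_1)R(\hat m,-\beta_1)$, a sequence of the form~\eqref{even2} for $U^{-1}$, and conversely; hence $U$ admits~\eqref{even2} with $2l$ pieces if and only if $U^{-1}=U^\dagger$ admits~\eqref{even1} with $2l$ pieces. From~\eqref{SU2} one verifies $U(\theta,\psi,\phi)^\dagger=U(\theta,\psi,-\phi)$ (with $-\phi$ reduced into $[0,4\pi)$), so by part~(i) the condition is $\Lambda(\theta,\psi,-\phi,\Theta)\le(l-1)\Theta$, i.e.~\eqref{even_cr2}.

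\emph{Main obstacle.} The only genuinely laborious part is the trigonometric matching $|P|^2=B$, $|Q|^2=A$, $\operatorname{Re}(P\bar Q)=C$; everything else is short bookkeeping. Some minor care is also needed to confirm that $R(\hat m,-\beta)U$ can always be put into the standard parametrization~\eqref{SU2} up to an irrelevant global phase (so that Theorem~\ref{th:odd}(i) applies to it), and that $|\delta_1(R(\hat m,-\beta)U)|$ depends on $\beta$ only through the induced action on the Bloch sphere, which is what makes the sinusoidal minimization legitimate.
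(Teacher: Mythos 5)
Your proposal is correct and follows essentially the same route as the paper: peel off the leading $R(\hat m,\beta_1)$, invoke Theorem~\ref{th:odd}(i) on the residual $(2l-1)$-piece $z$-$m$-$\cdots$-$z$ factor, minimize the resulting sinusoid $|e_{12}(\beta_1)|^2=\tfrac{B-A}{2}\cos\beta_1+C\sin\beta_1+\tfrac{A+B}{2}$ over $\beta_1$ to obtain $\Lambda$, and handle part~(ii) by inversion using $U(\theta,\psi,\phi)^\dagger=U(\theta,\psi,-\phi)$. The identifications $|P|^2=B$, $|Q|^2=A$, $\operatorname{Re}(P\bar Q)=C$ check out against the paper's definitions, so no gap remains.
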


\begin{proof}

\textbf{Case (i)}:

We first define
 \begin{equation}
 \left[\begin{array}{cc}
	e_{11}(\beta_1) & e_{12}(\beta_1)\\
	e_{21}(\beta_1) & e_{22}(\beta_1)
	\end{array}\right] \equiv R\left( \hat{m}, -\beta_1 \right) U \left( \theta,\psi,\phi \right).  \label{mr}
 \end{equation}
 According to Eq.~\eqref{delta_ele} and Theorem \ref{th:odd}, the existence of Eq.~\eqref{even1} is equivalent to the existence of $\beta_1\in[0,4\pi)$, such that 
 \begin{equation}
  \arcsin|e_{12}(\beta_1)|\leqslant(l-1) \Theta. \label{arce12}
  \end{equation}
   It can be calculated from Eq.~\eqref{mr} that 
 \begin{align}
  e_{12}(\beta_1)= &e^{- i \psi} \left( - i \cos \frac{\beta_1}{2} + \cos \Theta \sin
  \frac{\beta_1}{2} \right) \sin \theta \sin \frac{\phi}{2} \notag\\
  &+ i \sin
  \frac{\beta_1}{2} \sin \Theta \left( \cos \frac{\phi}{2} + i \cos
  \theta \sin \frac{\phi}{2} \right). 
 \end{align} 
 After some further calculation, one can obtain that 
 \begin{align}
  |e_{12}(\beta_1)|^2&= A\sin^2\frac{\beta_1}{2}+B\cos^2\frac{\beta_1}{2}+C\sin\beta_1 \notag\\
  &=  \frac{(B - A)}{2} \cos \beta_1 + C \sin\beta_1 + \frac{A + B}{2}. \label{even_e12}
 \end{align}  
By varying $\beta_1$, the minimum of $\arcsin|e_{12}(\beta_1)|$ is exactly given by: 
 \begin{equation}
 \min \left[ \arcsin|e_{12}(\beta_1)|\right]=\Lambda(\theta,\psi,\phi,\Theta). \label{arcf}
 \end{equation}
 Combining Eq.~\eqref{arce12} and Eq.~\eqref{arcf}, one can conclude that (i) of Theorem \ref{th:even} holds true.

\textbf{Case (ii)}:

By taking the inverse operation on both sides of Eq.~\eqref{even2}, it is equivalent to 
 \begin{align}
	 R \left( \theta,\psi,-\phi \right) =& R\left( \hat{m}, -\gamma_l \right) R \left(\hat{z},-\beta_l \right) \ldots R\left(\hat{m}, -\gamma_1 \right)R(\hat{z},-\beta_1),
	 \end{align}
	 or
	  \begin{align}
	 R \left( \theta,\psi,\tilde{\phi} \right)=&R\left( \hat{m}, \tilde{\gamma}_l \right) R \left(\hat{z},\tilde{\beta}_l \right) \ldots R\left(\hat{m}, \tilde{\gamma}_1 \right)R(\hat{z},\tilde{\beta}_1),\label{even2_2}
\end{align}
where 
\begin{subequations}
\begin{align}
\tilde{\phi}&=-\phi~\rm{mod}~4\pi,\\
\tilde{\gamma}_i&=-\gamma_i~\rm{mod}~4\pi,\\ 
\tilde{\beta}_i&=-\beta_i~\rm{mod}~4\pi.
\end{align}\label{tilde}
\end{subequations}
 According to (i) of Theorem \ref{th:even}, the existence of Eq.~\eqref{even2_2} is equivalent to 
 \begin{equation}
 \Lambda(\theta,\psi,\tilde{\phi},\Theta)\leqslant (l-1)\Theta.
 \end{equation}
  Since $ \Lambda(\theta,\psi,-\phi,\Theta)=\Lambda(\theta,\psi,\tilde{\phi},\Theta)$, (ii) of Theorem \ref{th:even} also holds true.
 \end{proof}

 \subsubsection{Constructing even-piece decompositions}\label{sec:even}
 When Eq.~\eqref{even_cr1} is satisfied, we should construct Eq.~\eqref{even1}. Firstly, $\beta_1$ can take any values that satisfy Eq.~\eqref{arce12}, or one can simply take
 \begin{equation}
 \beta_1=\pi+\text{Arg}\left(\frac{B-A}{2}+iC \right),
 \end{equation}
 which makes the left hand side of Eq.~\eqref{arce12} reach its minimum. If we denote $R(\theta',\psi',\phi')=R(\hat{m},-\beta_1)U(\theta,\psi,\phi)$, 
 other parameters can be obtained by applying the odd-piece decomposition scheme (cf. Sec.~\ref{sec:odd}) on $R(\theta',\psi',\phi')$.
 
When Eq.~\eqref{even_cr2} is satisfied, we should construct Eq.~\eqref{even2}. To do so, we construct Eq.~\eqref{even2_2} first, then determine the values of $\tilde{\beta}_{i}, \tilde{\gamma}_{i}$ with the same method of (i). After that, one can obtain the values of $\beta_i,\gamma_i$ in Eq.~\eqref{even2} from Eq.~\eqref{tilde}.
 
\subsection{Minimum number of pieces for all possible rotations} \label{sec:min}
We separate the problem into two cases: odd-piece and even-piece decompositions. We recall that $\Theta\in(0,\pi/2]$ is defined as the angle between two fixed axes.  ~\\
\textbf{Theorem 6.1} \textit{ (Odd-piece) Arbitrary rotations $U\in\mathcal{A}$ can be decomposed to $2l-1$ pieces with $l\in \mathbb{Z}^{+}$, if and only if 
 \begin{equation}
 \Theta\geqslant \frac{\pi}{2(l-1)}. \label{eq:crodd}
 \end{equation}
\label{thm:supplodd}}
\begin{proof}

\textit{Sufficiency}:

When Eq.~\eqref{eq:crodd} is satisifed, we have $(l-1)\Theta\geqslant\frac{\pi}{2}$. Since $|\delta_1(\theta,\phi)|\in[0,\frac{\pi}{2}]$ and $|\delta_2(\theta,\psi,\phi,\Theta)|\in[0,\frac{\pi}{2}]$, according to Theorem \ref{th:odd}, there exist a decomposition of $2l-1$ pieces  for arbitrary rotations.

\textit{Necessity}:

For $R(\frac{\pi}{2},\frac{\pi}{2},\pi)$, we have 
\begin{equation}
	 \delta_1\left(\frac{\pi}{2},\pi\right)= \delta_2\left(\frac{\pi}{2},\frac{\pi}{2},\pi,\Theta\right) =\frac{\pi}{2}.
\end{equation}
So when $\Theta< \frac{\pi}{2l}$, neither Eq.~\eqref{crodd_1} nor Eq.~\eqref{crodd_2} can be satisfied. According to Theorem \ref{th:odd}, the $2l-1$ pieces decomposition of $R(\frac{\pi}{2},\frac{\pi}{2},\pi)$ does not exist.
\end{proof}~\\~\\
\textbf{Theorem 6.2} \textit{(Even-piece) Arbitrary rotations $U\in\mathcal{A}$ can be decomposed to $2l$ pieces, if and only if 
 \begin{equation}
 \Theta\geqslant \frac{\pi}{2l-1}.\label{creven}
 \end{equation}}

\begin{proof}

\textit{Sufficiency}:
 
   According to Lemma \ref{le:two}, arbitrary rotations can be written as 
 \begin{equation}
 U(\theta,\psi,\phi)=R(\hat{z},2\psi)R(\theta'',0,\phi''), \label{xz_1}
 \end{equation}
  for certain values of $\theta''\in[0, \pi)$, and $\phi''\in[0,4\pi)$. To show the existence of $2l$-piece decomposition for $U(\theta,\psi,\phi)$, one only needs to prove that $R(\theta'',0,\phi'')$ can always be decomposed into $2l-1$ pieces when Eq.~\eqref{creven} is satisfied.
 
For $R(\theta'',0,\phi'')$  we have
  \begin{subequations}
 \begin{align}
 \delta_1(\theta'',\phi'')&= \arcsin\left| \sin\theta''\sin\frac{\phi''}{2}   \right|\leqslant |\theta''|,\\
	\delta_2(\theta'',0,\phi'',\Theta)&= \arcsin \left|\sin \frac{\phi''}{2} \sin \left(\Theta-\theta'' \right) \right|\leqslant |\Theta-\theta''|.
 \end{align}
 \end{subequations} 
It is easy to check that when $\theta''\in[0,(l-1)\Theta]\cup[\pi-(l-1)\Theta, \pi)$, we have
 \begin{equation}
 \delta_1(\theta'',\phi'')\leqslant (l-1)\Theta,
 \end{equation}
 and when 
 $\theta''\in[\Theta,l\Theta]$, we have
 \begin{equation}
 \delta_2(\theta'',0,\phi'',\Theta) \leqslant (l-1)\Theta.
  \end{equation}
  Therefore, when 
  \begin{equation}
  \theta''\in \mathcal{A}_{\theta}=[0,l\Theta]\cup[\pi-(l-1)\Theta,\pi],\label{cover}
  \end{equation}
   we have   
  \begin{equation}
  \Theta \geqslant  \frac{\min\left\{\delta(\theta'',\phi''),\delta'(\theta'',0,\phi'',\Theta)\right\} }{l}.
  \end{equation}
  According to Theorem \ref{th:odd}, $R(\theta'',0,\phi'')$ can be decomposed to $2l-1$ pieces when $\theta''\in \mathcal{A}_{\theta}$. Furthermore, for $\Theta\geqslant \frac{\pi}{2l-1}$, we have $[0,\pi)\subset\mathcal{A}_{\theta}$, so $R(\theta'',0,\phi'')$ can always be decomposed to $2l-1$ pieces. ~\\
    
\textit{Necessity}:

The necessity can be proven by finding specific rotations that fail to be decomposed to $2l$ pieces when $ \Theta< \frac{\pi}{2l-1}$.

For $l=1$, we consider $R(\frac{3\pi}{4},0,\pi)$, and notice that for arbitrary values of $\Theta\in(0,\frac{\pi}{2}]$, we have $\Lambda(\frac{3\pi}{4},0,\pm\pi,\Theta)=\frac{1}{4}\left[1+\sin^2 (2\Theta)\right]>0$. Therefore, $R(\frac{3\pi}{4},0,\pi)$ cannot be decomposed in two steps with $\hat{z}$ and $\hat{m}$.

 For $l>1$, we consider the rotation $R(\frac{l}{2l-1}\pi,0,\pi)$. When $\Theta< \frac{\pi}{2l-1}$, it is easy to check that  
\begin{equation}
\Lambda\left(\frac{l}{2l-1}\pi,0,\pm\pi,\Theta\right)=\frac{\pi}{2l-1}(l-1) > (l-1)\Theta.
\end{equation}
So according to Theorem \ref{th:even}, the rotation $R(-\frac{l-1}{2l-1}\pi,0,\pi)$ cannot be decomposed in $2l$ steps with $\hat{z}$ and $\hat{m}$.
 \end{proof}~\\~\\
Combining the above results, we have the following:

\begin{theorem}\label{th:fixed}

 Arbitrary rotations $U\in\mathcal{A}$ can be decomposed to $p$ pieces in the form of
 \begin{equation}
 U=\prod_{i=1}^{p}R(\hat{n}_i,\phi_i)
 \end{equation}
 with $R(\hat{n}_i,\phi_i)\in \mathcal{G}_{b}$ and $\hat{n}_i\neq\hat{n}_{i+1}$, if and only if 
 \begin{equation}
 \Theta\geqslant \frac{\pi}{p-1}.\label{creven}
 \end{equation}
\end{theorem}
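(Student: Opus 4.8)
The strategy is to split on the parity of $p$ and invoke the two parity-specific statements already established above: the odd-piece result (Theorem~6.1) and the even-piece result (Theorem~6.2). The only arithmetic to notice is that when $p=2l-1$ one has $p-1=2(l-1)$, and when $p=2l$ one has $p-1=2l-1$, so in both cases the threshold quoted there is exactly $\pi/(p-1)$. A preliminary observation makes the reduction clean: with only the two axes $\hat z,\hat m$ available and the constraint $\hat n_i\neq\hat n_{i+1}$, a $p$-piece product is forced to alternate strictly, hence is of one of the two forms $z$-$m$-$z$-$\cdots$ or $m$-$z$-$m$-$\cdots$; these are precisely the two families enumerated by Theorem~\ref{th:odd} (for odd $p$) and Theorem~\ref{th:even} (for even $p$), the choice of starting axis being absorbed into the minima $\delta_*$ and $\Lambda_*$. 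So no decomposition pattern is missed.

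For $p=2l-1$: by Theorem~\ref{th:odd} a given $U(\theta,\psi,\phi)$ has a $p$-piece decomposition iff $\min\{|\delta_1(\theta,\phi)|,|\delta_2(\theta,\psi,\phi,\Theta)|\}\le(l-1)\Theta$, so \emph{every} $U$ is decomposable iff $(l-1)\Theta\ge\sup_U\min\{|\delta_1|,|\delta_2|\}$. I would then show this supremum equals $\pi/2$: the upper bound $\pi/2$ is immediate since $\delta_1$ and $\delta_2$ are arcsines of numbers in $[0,1]$, and equality is attained at $U=R(\tfrac{\pi}{2},\tfrac{\pi}{2},\pi)$, where $\delta_1=\delta_2=\tfrac{\pi}{2}$. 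Hence the criterion for all $U$ is $\Theta\ge\pi/(2(l-1))=\pi/(p-1)$.

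For $p=2l$: the key step is Lemma~\ref{le:two}, which lets me write any $U$ as $R(\hat z,2\psi)\,R(\theta'',0,\phi'')$; prepending the $z$-rotation to an $\hat m$-started $(2l-1)$-piece decomposition of $R(\theta'',0,\phi'')$ produces a genuine alternating $2l$-piece sequence. So for sufficiency it is enough to decompose every rotation of the special form $R(\theta'',0,\phi'')$ into $2l-1$ pieces. Using $\delta_1(\theta'',\phi'')\le|\theta''|$ and $\delta_2(\theta'',0,\phi'',\Theta)\le|\Theta-\theta''|$ together with Theorem~\ref{th:odd}, the admissible range of $\theta''$ is $[0,l\Theta]\cup[\pi-(l-1)\Theta,\pi]$, and this covers all of $[0,\pi)$ exactly when $l\Theta+(l-1)\Theta\ge\pi$, i.e. $\Theta\ge\pi/(2l-1)=\pi/(p-1)$. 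For necessity I would exhibit the witness gate $R(\tfrac{l}{2l-1}\pi,0,\pi)$: one checks $\Lambda(\tfrac{l}{2l-1}\pi,0,\pm\pi,\Theta)=\tfrac{l-1}{2l-1}\pi$, which exceeds $(l-1)\Theta$ whenever $\Theta<\pi/(2l-1)$, so by Theorem~\ref{th:even} this gate has no $2l$-piece decomposition. Combining the two parities gives the stated equivalence.

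The bulk of the difficulty is not in Theorem~\ref{th:fixed} itself but in the inputs it uses: Theorem~\ref{th:odd}/\ref{th:even} and the extremal evaluations $\sup_U\delta_*=\pi/2$ and $\sup_U\Lambda_*=\tfrac{l-1}{2l-1}\pi$, including producing the tight witness gates. A slicker route for the even case, once the odd case is in hand, is a one-line monotonicity remark: left-multiplying a $z$-started $(2l-1)$-piece decomposition of $U$ by $R(\hat m,0)$, which is the identity, yields a $2l$-piece alternating decomposition, so decomposability into exactly $p$ pieces is monotone in $p$ and the even threshold is forced to coincide with $\pi/(p-1)$ as well.
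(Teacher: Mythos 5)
Your main argument is correct and follows essentially the same route as the paper: the paper proves this theorem by combining its Theorems~6.1 and~6.2, whose proofs are exactly your two parity cases --- the odd case via $\sup_U\min\{|\delta_1|,|\delta_2|\}=\pi/2$ with witness $R(\tfrac{\pi}{2},\tfrac{\pi}{2},\pi)$, and the even case via Lemma~\ref{le:two}, the covering set $[0,l\Theta]\cup[\pi-(l-1)\Theta,\pi]$, and the witness $R(\tfrac{l}{2l-1}\pi,0,\pi)$.

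One caution about your closing ``slicker route'': padding a $(2l-1)$-piece decomposition with $R(\hat m,0)$ does show monotonicity of decomposability in $p$, but that only bounds the $2l$-piece threshold by the $(2l-1)$-piece threshold $\pi/(2l-2)$, which is \emph{strictly larger} than the claimed $\pi/(2l-1)$. For $\Theta\in[\tfrac{\pi}{2l-1},\tfrac{\pi}{2l-2})$ the monotonicity argument says nothing, so the even threshold is not ``forced to coincide'' with $\pi/(p-1)$ by this remark alone; the full covering argument you give earlier is genuinely needed.
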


\section{Code for constructing exact minimal decomposition sequence}\label{sec:code}

We have provided matlab code (Type\_I.m and Type\_II.m) for constructing explicit minimal decomposition of Type I and Type II qubits described in this work. 

For a target unitary transformation $U=U(\theta,\psi,\phi)$ and $\Theta$ (we restrict $\Theta\in(0,\pi]$ for Type I or $\Theta\in(0,\pi/2]$ for Type II), the inputs of the function are $\theta, \psi,\phi,\Theta$ respectively. For example, one inputs:
\begin{equation}
 >>\text{Type}\_\, \text{I}\;(\text{pi}/3,0,\text{pi},\text{pi}/4)\notag
\end{equation}

the output should be 
\begin{align}
    &0.7854 \quad 0 \quad\quad\quad\;\, 0.7854\notag\\
    &4.4411 \quad  0.7495 \quad  10.7243\notag
\end{align}
The first line corresponds to the polar angles of each elementary rotations in order while the second line corresponds to the rotation angle. One can verify that: 
\begin{widetext}
\begin{equation}
U(\pi/3,0,\pi)=R(0.7854,0,4.4411)R(0,0,0.7495)R(0.7854,0,10.7243).
\end{equation}
\end{widetext}
\end{appendix}

\end{document}